\newtheorem{thm}{Theorem}[section]
\newtheorem*{thm*}{Theorem}
\newtheorem{cor}{Corollary}[section]
\newtheorem{lem}{Lemma}[section]
\newtheorem{prop}{Proposition}[section]
\newtheorem*{prop*}{Proposition}
\theoremstyle{definition}
\newtheorem{defn}{Definition}[section]
\newtheorem{ass}{Assumption}[section]
\theoremstyle{remark}
\newtheorem{rem}{Remark}[section]
\numberwithin{equation}{section}
\def\tr{\mathrm{Tr}}
\def\gR{\mathfrak R}
\def\gG{\mathfrak G}
\def\sp{\vspace{0.2cm}}
\newcommand{\modifica}[1]{#1}               % MODIFICA INVISIBILE
\begin{document}

\title[Quantum concentration inequalities and equivalence of the thermodynamical ensembles: an optimal mass transport approach]{Quantum concentration inequalities and equivalence of the thermodynamical ensembles: an optimal mass transport approach}

%%=============================================================%%
%% GivenName	-> \fnm{Joergen W.}
%% Particle	-> \spfx{van der} -> surname prefix
%% FamilyName	-> \sur{Ploeg}
%% Suffix	-> \sfx{IV}
%% \author*[1,2]{\fnm{Joergen W.} \spfx{van der} \sur{Ploeg} 
%%  \sfx{IV}}\email{iauthor@gmail.com}
%%=============================================================%%
\author[1]{{Giacomo De Palma}} \email{giacomo.depalma@unibo.it}\equalcont{These authors contributed equally to this work.}
\author*[1,2]{{Davide Pastorello}} \email{davide.pastorello3@unibo.it}\equalcont{These authors contributed equally to this work.}
\affil[1]{University of Bologna, Department of Mathematics, Piazza di Porta San Donato 5, 40126 Bologna, Italy}
\affil[2]{TIFPA-INFN, via Sommarive 14, 38123 Povo (Trento), Italy}

%\author*[1,2]{\fnm{First} \sur{Author}}\email{iauthor@gmail.com}

%\author[2,3]{\fnm{Second} \sur{Author}}\email{iiauthor@gmail.com}
%\equalcont{These authors contributed equally to this work.}

%\author[1,2]{\fnm{Third} \sur{Author}}\email{iiiauthor@gmail.com}
%\equalcont{These authors contributed equally to this work.}

%\affil*[1]{\orgdiv{Department}, \orgname{Organization}, \orgaddress{\street{Street}, \city{City}, \postcode{100190}, \state{State}, \country{Country}}}

%\affil[2]{\orgdiv{Department}, \orgname{Organization}, \orgaddress{\street{Street}, \city{City}, \postcode{10587}, \state{State}, \country{Country}}}

%\affil[3]{\orgdiv{Department}, \orgname{Organization}, \orgaddress{\street{Street}, \city{City}, \postcode{610101}, \state{State}, \country{Country}}}

%%==================================%%
%% Sample for unstructured abstract %%
%%==================================%%

\abstract{ We prove new concentration inequalities for quantum spin systems which apply to any local observable measured on any product state or on any state with exponentially decaying correlations. Our results do not require the spins to be arranged in a regular lattice, and cover the case of observables that contain terms acting on spins at arbitrary distance. Moreover, we introduce a local $W_1$ distance, which quantifies the distinguishability of two states with respect to local observables. We prove a transportation-cost inequality stating that the local $W_1$ distance between a generic state and a state with exponentially decaying correlations is upper bounded by a function of their relative entropy. Finally, we apply such inequality to prove the equivalence between the canonical and microcanonical ensembles of quantum statistical mechanics and the weak eigenstate thermalization hypothesis for the Hamiltonians whose Gibbs states have exponentially decaying correlations.}

\keywords{Quantum spin systems, concentration inequalities, quantum Wasserstein distance, canonical and microcanonical ensembles}

%%\pacs[JEL Classification]{D8, H51}

%%\pacs[MSC Classification]{35A01, 65L10, 65L12, 65L20, 65L70}

\maketitle

\section{Introduction}

Concentration inequalities provide an upper bound to the probability of the deviations of a random variable from its mean and have become a fundamental tool in mathematics, physics and computer science \cite{boucheron2013concentration}.
Concentration inequalities have been recently proved in the setting of local observables mesured on quantum spin systems with weak correlations and have explained important physical effects.
Local observables have the form
\begin{equation}\label{eq:local}
H = \sum_{\Lambda\subseteq[n]} h_\Lambda\,,
\end{equation}
where each $\Lambda$ is a subset of the set of all the spins and each $h_\Lambda$ only acts on the spins in the corresponding $\Lambda$.
The topic has originally been investigated for product states \cite{goderis1989central,hartmann2004existence,Kuwahara_2016, abrahamsen2020short,de2021quantum} and later for high-temperature Gibbs states \cite{de2022quantum,KUWAHARA2020gaussian}, time-evolved product states \cite{PRXQuantum.4.020340}, and states of spin lattices whose correlations decay exponentially with the distance \cite{anshu2016concentration}.
Concentration inequalities for Gibbs states have been the key tool to prove the equivalence between the canonical and the microcanonical ensembles of quantum statistical mechanics and the weak version of the eigenstate thermalization hypothesis \cite{brandao_equivalence_2015,tasaki2018local,alhambra2023quantum,kuwahara2020eigenstate,KUWAHARA2020gaussian,de2022quantum}.

The first part of the present paper is devoted to proving new concentration inequalities for product states and states of a quantum spin lattice with correlations that decay exponentially with the distance.
Such concentration inequalities are based on a local norm for observables, which is given by the maximum over the spins of twice the sum of the operator norms of the addends in \eqref{eq:local} that act on that spin (\autoref{defn:local}).

The best available concentration inequality for product states is \cite[Theorem 5.3]{anshu2016concentration}, and states that the concentration of local observables is exponential in the number of spins whenever each region $\Lambda$ in the sum \eqref{eq:local} intersects only $O(1)$ other regions.
Our \autoref{thm:concentrationproduct} extends this result to the setting where each region $\Lambda$ contains $O(1)$ spins and the local norm of the observable is $O(1)$.
We stress that our result covers also the case of observables with all-to-all interactions, \emph{e.g.}, where the sum in \eqref{eq:local} runs over all the couples of spins, provided that the sum of the operator norms of the addends that act on any given spin is $O(1)$.

The best available concentration inequalities for the states of a quantum spin lattice whose correlations decay exponentially with the distance are \cite[Theorem 4.2]{anshu2016concentration} and \cite[Eq. (S.17)]{kuwahara2020eigenstate}, and state that the probability of deviations from the average decays as $\exp\left(-\Omega\left(n^\frac{1}{2d+1}\right)\right)$, where $n$ is the number of spins and $d>0$ depends on the geometry of the lattice, whenever the diameter of all the regions in the sum \eqref{eq:local} is $O(1)$.
Our \autoref{thm:concentration2} extends these results to the setting where each region $\Lambda$ contains $O(1)$ spins and the local norm of the observable is $O(1)$.
Furthermore, while \cite[Theorem 4.2]{anshu2016concentration} and \cite[Eq. (S.17)]{kuwahara2020eigenstate} require a regular lattice, our results do not require any regular structure nor the notion of neighboring spins, but only require a distance on the set of the spins such that the size of each metric ball is $O(r^d)$, where $r$ is the radius (\autoref{ass:ball}).

The second part of the paper is devoted to the equivalence between the ensembles of quantum statistical mechanics.
It has long been known that a concentration inequality for an observable measured on a Gibbs state implies an upper bound in the difference between the expectation value of that observable on the Gibbs and on any microcanonical states with sufficiently close average energy \cite{brandao_equivalence_2015,tasaki2018local,alhambra2023quantum,kuwahara2020eigenstate,KUWAHARA2020gaussian,de2022quantum}, thus bounding the distinguishability between the Gibbs and the microcanonical states and proving the equivalence between the canonical and microcanonical ensembles for the observables that exhibit concentration.
Such link between concentration and equivalence of the ensembles has been connected to the quantum theory of optimal mass transport \cite{rouze2019concentration,carlen2020non,gao2020fisher,de2021quantum}.
In this context, a quantum generalization of the Lipschitz constant has been defined to quantify the maximum dependence of an observable on any given spin, and concentration inequalities have been proved for observables with $O(1)$ Lipschitz constant measured on weakly correlated states \cite{de2022quantum}.
The distinguishability between quantum states with respect to observables with unit Lipschitz constant has been quantified by the quantum Wasserstein distance of order $1$, or quantum $W_1$ distance \cite{de2021quantum,depalma2023quantum}.
Such concentration has been proved to be linked to transportation-cost inequalities that provide an upper bound to the $W_1$ distance between a generic state and a Gibbs state in terms of the square root of their relative entropy, and the equivalence between the canonical and microcanonical ensembles has been proved via such transportation-cost inequalities \cite[Proposition 12]{de2022quantum}.
It has further been proved that, whenever a Gibbs state satisfies such transportation-cost inequality, it is close with respect to the $W_1$ distance to any state that has approximately the same entropy, \emph{i.e.}, such that the difference between the entropy of the two states is $o(n)$, where $n$ is the number of spins.

In this paper, we consider the distinguishability with respect to $k$-local observables, \emph{i.e.}, the local observables such that each region $\Lambda$ in the sum \eqref{eq:local} contains at most $k=O(1)$ spins, and we quantify such distinguishability with the $k$-local quantum $W_1$ distance (\autoref{defn:kW1}).
First, we employ our concentration inequality for the states of a quantum spin lattice with exponentially decaying correlations to prove transportation-cost inequalities for the same states (\autoref{thm:bound_m_entropy}).
Then, we apply this result to prove that the Gibbs states with exponentially decaying correlations are close in the $k$-local quantum $W_1$ distance to any state with approximately the same entropy, \emph{i.e.}, such that the difference between the entropy of the two states is $o\left(n^\frac{1}{2d+1}\right)$ (\autoref{prop:equiv}), where $d>0$ depends on the growth of the size of the metric balls as explained above.
In particular, we prove the equivalence between the canonical and microcanonical ensembles for such Gibbs states with respect to all the $k$-local observables whose local norm is $O(1)$ (\autoref{thm:equiv}).

The paper is organized as follows.
In \autoref{sec:setup}, we set the notation for the paper.
In \autoref{sec:conc}, we introduce the local norm and prove the concentration inequalities for product states and states with exponentially decaying correlations.
In \autoref{sec:TCI}, we introduce the $k$-local quantum $W_1$ distance and prove that any quantum state with exponentially decaying correlations satisfies a transportation-cost inequality with respect to such distance.
In \autoref{sec:ensembles}, we introduce the problem of the equivalence of the ensembles of quantum statistical mechanics and prove such equivalence for the above states.
We conclude in \autoref{sec:concl}.
\autoref{sec:aux} contains the proof of the auxiliary lemmas.

\section{Setup}\label{sec:setup}

Let us start by setting the notation for the paper:

\sp

\begin{defn}
    For any $k\in\mathbb{N}$ we define
    \begin{equation}
        [k] = \{1,\,\ldots,\,k\}\,.
    \end{equation}
\end{defn}

We consider a quantum system made by $n$ spins, which we label with the integers from $1$ to $n$.
The local Hilbert space of each spin is $\mathbb{C}^q$, such that the Hilbert space of the system is $\left(\mathbb{C}^q\right)^{\otimes n}$.

\sp

\begin{defn}
    For any subset of the spins $\Lambda\subseteq[n]$, let
    \begin{equation}
        \mathcal{H}_\Lambda = \bigotimes_{x\in\Lambda}\mathbb{C}^q
    \end{equation}
    be the Hilbert space associated with the spins in $\Lambda$, let $\mathcal{O}_\Lambda$ be the set of the self-adjoint linear operators acting on $\mathcal{H}_\Lambda$, let $\mathcal{O}_\Lambda^T$ be the set of the traceless operators in $\mathcal{O}_\Lambda$, and let $\mathcal{S}_\Lambda$ be the set of the quantum states acting on $\mathcal{H}_\Lambda$.
\end{defn}

\sp

\begin{defn}[$k$-local operator]
    A linear operator $X$ acting on $\left(\mathbb{C}^q\right)^{\otimes n}$ is $k$-local if it is a linear combination of linear operators acting on at most $k$ spins each, \emph{i.e.}, if
    \begin{equation}
X = \sum_{\Lambda\subseteq[n]:|\Lambda|\le k}X_\Lambda\,,     
    \end{equation}
    where each $X_\Lambda$ acts only on the corresponding $\mathcal{H}_\Lambda$.
    For any $\Lambda\subseteq[n]$, we denote with $\mathcal{O}_\Lambda^{(k)}$ the set of the $k$-local self-adjoint linear operators acting on $\mathcal{H}_\Lambda$.
\end{defn}

\sp

When dealing with non-product states, we will need to introduce a distance $\mathsf{d}$ on the set of the spins $[n]$.
For example, if the spin are located at the vertices of a graph, $\mathsf{d}$ can be the graph distance given by the length of the minimum path connecting two vertices.
Otherwise, if the spins are located at points of $\mathbb{R}^d$, the distance can be the Euclidean distance or any $\ell^p$ distance.
We stress that we do not require any regular structure such as a square lattice, nor a notion of neighboring spins.

\sp

\begin{ass}[Dimensionality of the spin lattice]\label{ass:ball}
We assume that the size of the balls of $\mathsf{d}$ with radius $r$ are $O(r^d)$, \emph{i.e.}, that there exist $d\ge1$ and $A>0$ such that for any $v\in[n]$ and any $r>0$ we have
\begin{equation}\label{eq:bound_ball}
    |B_\mathsf d(v,r)|\leq \, A \,r^d\,,
\end{equation}
where $d$ plays the role of the dimensionality of the lattice.
\end{ass}

\sp

When dealing with non-product states, we will also need the notion of correlations decaying exponentially with the distance $\mathsf{d}$:

\sp

\begin{defn}[Correlation length]\label{ass:length}
We say that the state $\omega\in\mathcal{S}_{[n]}$ has correlation length $\xi>0$ if there exist $C>0$ independent of $n$ such that for any two disjoint regions $\Lambda_1,\,\Lambda_2\subset[n]$ and any two observables $h_{\Lambda_1}\in\mathcal{O}_{\Lambda_1}$ and $h_{\Lambda_2}\in\mathcal{O}_{\Lambda_2}$ we have
\begin{equation}
\left|\left\langle h_{\Lambda_1}\,h_{\Lambda_2}\right\rangle - \left\langle h_{\Lambda_1}\right\rangle\left\langle h_{\Lambda_2}\right\rangle\right| \le C\modifica{\left\|h_{\Lambda_1}\right\|\left\|h_{\Lambda_2}\right\|}\exp\left(-\frac{\mathsf d(\Lambda_1,\Lambda_2)}{\xi}\right)\,,
\end{equation}
where
\begin{equation}
\mathsf d(\Lambda_1,\Lambda_2) = \min\left\{\mathsf{d}(x_1,x_2) : x_1\in\Lambda_1,\,x_2\in\Lambda_2\right\}
\end{equation}
and the angle brackets denote the expectation with respect to $\omega$.
\end{defn}

\section{Concentration inequalities}\label{sec:conc}

In this section we introduce the local norm and derive the concentration inequalities for two relevant kinds of states of a spin lattice system: product states and states with exponentially decaying correlations.
More precisely, we bound from below the cumulative distribution functions of local observables with a function of their local norm.

\subsection{The local norm}
Let us introduce the local norm on which all our results are based:

\sp

\begin{defn}[Local norm]\label{local norm}\label{defn:local}
The local norm of an observable $H\in\mathcal{O}_{[n]}$ is
\begin{equation}\label{eq:loc}
\|H\|_{\textnormal{loc}} = 2\min\left\{\max_{x\in[n]}\sum_{\Lambda\ni x}\left\|h_\Lambda\right\| : H = \sum_{\Lambda\subseteq[n]}h_\Lambda\,,\; h_\Lambda\in\mathcal{O}_\Lambda\right\}\,,
\end{equation}
where $\|\cdot\|$ denotes the operator norm.
In words, we consider all the decompositions of $H$ as a sum of local operators, and define the dependence of any such decomposition on a spin $x$ as twice the sum of the operator norm of each local operator that acts on $x$.
We then define the local norm of such decomposition as the maximum dependence on a spin, and the local norm of $H$ as the minimum local norm of all its possible decompositions.
\end{defn}

\sp

\begin{rem}
    The local norm \eqref{eq:loc} is similar to the norm considered in \cite[Eq. (4)]{KUWAHARA2020gaussian}.
    The novelty of \eqref{eq:loc} with respect to \cite[Eq. (4)]{KUWAHARA2020gaussian} is that, while \cite[Eq. (4)]{KUWAHARA2020gaussian} considers only a fixed decomposition of $H$ of the form \eqref{eq:local}, \eqref{eq:loc} optimizes over all the possible decompositions. 
\end{rem}

\sp

\begin{rem}
    The local norm \eqref{eq:loc} can be recovered as a special case of the local norm defined in \cite{depalma2023classical}, where the norm of each $h_\Lambda$ is multiplied by a penalty $c_{|\Lambda|}$ depending on $|\Lambda|$:
    \begin{equation}\label{eq:locc}
\|H\|_{\widetilde{\textnormal{loc}}} = 2\min\left\{\max_{x\in[n]}\sum_{\Lambda\ni x}c_{|\Lambda|}\left\|h_\Lambda\right\| : H = \sum_{\Lambda\subseteq[n]}h_\Lambda\,,\; h_\Lambda\in\mathcal{O}_\Lambda\right\}\,.
\end{equation}
    The local norm \eqref{eq:loc} can be recovered from \eqref{eq:locc} by setting to one all the penalties.
\end{rem}

\sp

\begin{rem}
    $\|\,\,\|_{\textnormal{loc}}$ is a seminorm such that $\|I\|_{\textnormal{loc}}=0$. 
\end{rem}

\subsection{Product states}

We start by proving a bound on the cumulative distribution function of the local observable $H$ measured on a product state as a function of the local norm of $H$.

In order to derive a concentration inequality involving the quantum local norm, we will first prove an upper bound to the expectation value of $e^H$.
The concentration inequality will then follow from the Markov inequality. 

\sp

    \begin{thm}\label{thm:bound1}
     Let $\omega$ be a product state of $n$ spins.
     Let $H\in\mathcal{O}_{[n]}$ be an observable, and let
     \begin{equation}
         H = \sum_{\Lambda\subseteq[n]}\tilde{h}_\Lambda
     \end{equation}
     be the decomposition of $H$ that achieves the local norm as in \eqref{eq:loc}.
     For any $\emptyset\neq\Lambda\subseteq[n]$, let
     \begin{equation}\label{h_Lambda}
     h_\Lambda := \tilde{h}_\Lambda - \left\langle\tilde{h}_\Lambda\right\rangle\,,
     \end{equation}
     where the angle brackets denote the expectation with respect to $\omega$.
     Then,
     \begin{equation}\label{eq:concentration1}
        \left\langle e^H\right\rangle \leq  \exp\left[\left\langle H\right\rangle + \sum_{\Lambda\subseteq[n]} \left(e^{|\Lambda| \| H\|_{\textnormal{loc}}}-1-\frac{|\Lambda|\|H\|_{\textnormal{loc}}}{2}\right ) \| h_\Lambda\|\right].
     \end{equation}
    \end{thm}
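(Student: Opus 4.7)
The plan is to reduce the inequality to a moment-generating-function bound on the centered operators $h_\Lambda$, and then establish that bound by expanding the exponential in a Taylor series whose terms are tamed by the product structure of $\omega$ and the local-norm constraint. Since $\langle H\rangle = \sum_\Lambda\langle\tilde h_\Lambda\rangle$ is a scalar it commutes with every operator, giving $\langle e^H\rangle = e^{\langle H\rangle}\,\langle e^{\sum_\Lambda h_\Lambda}\rangle$. It therefore suffices to prove
\[
\log\langle e^{\sum_\Lambda h_\Lambda}\rangle \le \sum_\Lambda\bigl(e^{|\Lambda|\|H\|_{\textnormal{loc}}}-1-\tfrac12|\Lambda|\|H\|_{\textnormal{loc}}\bigr)\|h_\Lambda\|.
\]

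The next step is the moment expansion
\[
\langle e^{\sum_\Lambda h_\Lambda}\rangle \;=\; \sum_{k\ge 0}\frac{1}{k!}\sum_{\Lambda_1,\ldots,\Lambda_k}\langle h_{\Lambda_1}\cdots h_{\Lambda_k}\rangle,
\]
combined with two observations about product states. First, if some $\Lambda_i$ is disjoint from all the other $\Lambda_j$ in the tuple, then $h_{\Lambda_i}$ commutes with every other factor and can be cycled out to produce a factor $\langle h_{\Lambda_i}\rangle = 0$; more generally, the expectation factorises over the connected components of the overlap graph on $\{\Lambda_1,\ldots,\Lambda_k\}$ (two regions being adjacent iff they share a spin), so only tuples whose overlap graph has no isolated vertex contribute. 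Second, every surviving tuple obeys the crude operator-norm estimate $|\langle h_{\Lambda_1}\cdots h_{\Lambda_k}\rangle|\le\prod_i\|h_{\Lambda_i}\|$.

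To count the surviving tuples I would enumerate them along a spanning tree of the overlap graph, rooted at a fixed region $\Lambda$. The key weight estimate is that at every spin $x$ the definition of the local norm, combined with $\|h_{\Lambda'}\|\le 2\|\tilde h_{\Lambda'}\|$, yields $\sum_{\Lambda'\ni x}\|h_{\Lambda'}\|\le\|H\|_{\textnormal{loc}}$; consequently the total weight of regions attached to a given $\Lambda_j$ through one of its spins is at most $|\Lambda_j|\|H\|_{\textnormal{loc}}$. Iterating along the tree and summing the resulting series over the cluster size produces the exponential $e^{|\Lambda|\|H\|_{\textnormal{loc}}}$ as the generating function for rooted clusters of weight $\|h_\Lambda\|$; the subtraction $1+\tfrac12|\Lambda|\|H\|_{\textnormal{loc}}$ corresponds to removing the $k=0$ term (trivial) and the $k=1$ term (which vanishes by centring), with the $\tfrac12$ arising from the symmetrisation of the unordered pair in the $k=2$ contribution. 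Applying $\log(1+x)\le x$ and exponentiating delivers \eqref{eq:concentration1}.

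The main obstacle is the combinatorial bookkeeping: the naive tree enumeration overcounts by symmetry factors that depend on the shape of the cluster, so matching the precise constants $1$ and $\tfrac12$ in the Taylor subtraction requires a careful accounting, likely via induction on the number of regions in the decomposition or a Cayley-style weighting of spanning trees. A secondary subtlety is non-commutativity: $h_{\Lambda_i}$ and $h_{\Lambda_j}$ commute only when $\Lambda_i\cap\Lambda_j=\emptyset$, so the factorisation of the expectation across connected components of the overlap graph must be justified by repeatedly using this disjointness commutation to regroup factors inside the trace before tracing out the detached region.
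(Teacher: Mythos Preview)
Your overall architecture matches the paper's: center each $\tilde h_\Lambda$, reduce to bounding $\langle e^{H_0}\rangle$ with $H_0=\sum_\Lambda h_\Lambda$, Taylor-expand, use the product structure of $\omega$ to kill every tuple whose overlap graph has an isolated vertex, bound the survivors by $\prod_i\|h_{\Lambda_i}\|$, and feed in the local-norm estimate $\sum_{\Lambda'\ni x}\|h_{\Lambda'}\|\le\|H\|_{\textnormal{loc}}$. Your handling of the quantum subtlety (disjoint supports commute, so the expectation factors over connected components) is also correct.

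The gap is at the combinatorial heart, precisely the step you flag as ``the main obstacle.'' The paper does \emph{not} enumerate spanning trees. Instead it majorises the indicator ``no isolated vertex'' by a sum over graphs $G$ on $[r]$ with no isolated vertex, weighted by $\prod_{(i,j)\in E}|\Lambda_i\cap\Lambda_j|$, and then observes that among such graphs the \emph{minimal} ones (removing any edge isolates a vertex) are exactly disjoint unions of \emph{star graphs}. This is the key structural lemma: once you know every connected minimal graph is a star, the sum over all minimal graphs factorises exactly as $\exp\bigl(\tfrac{C_2}{2}+\sum_{r\ge3}\tfrac{C_r}{(r-1)!}\bigr)$ by the exponential formula for labelled structures, where $C_r$ is the total weight of a star on $r$ vertices. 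Each $C_r$ is then bounded by $\sum_\Lambda\|h_\Lambda\|\,(|\Lambda|\,\|H\|_{\textnormal{loc}})^{r-1}$, and summing the exponential series in $r$ yields the bound \eqref{eq:concentration1} directly.

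By contrast, your proposed spanning-tree enumeration does not give the generating function $e^{|\Lambda|\|H\|_{\textnormal{loc}}}$: that expression is the generating function for \emph{depth-one} trees (stars) rooted at $\Lambda$, not for arbitrary rooted trees, whose generating function satisfies a nonlinear fixed-point equation and is generally much larger. The paper sidesteps this entirely via the minimal-graph reduction, so no tree recursion, Cayley weighting, or symmetry-factor bookkeeping is needed. Relatedly, the exponential on the right of \eqref{eq:concentration1} does not come from a $\log(1+x)\le x$ step; it emerges exactly from the factorisation over star components, and the constants $1$ and $\tfrac12$ in the subtraction arise from the $r=0,1$ terms of the Taylor series together with the symmetry factor of the two-vertex star.
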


  \begin{proof}
By definition of $h_\Lambda$, for every $\emptyset\not =\Lambda\subseteq[n]$ we have $\left\langle {h}_\Lambda\right\rangle = 0$ and
\begin{equation}\label{eq:HH_0}
H =\sum_{\Lambda\subseteq[n]}\tilde h_\Lambda= \tilde{h}_\emptyset + \sum_{\emptyset\neq\Lambda\subseteq[n]}\left(\left\langle\tilde{h}_\Lambda\right\rangle +h_\Lambda\right)= \left\langle H \right\rangle + \sum_{\emptyset\neq\Lambda\subseteq[n]}{h}_{\Lambda}\,.
  \end{equation}
        Let us define $H_0:=\sum_{\emptyset\neq\Lambda\subseteq[n]}{h}_{\Lambda}$ and observe that: 
        \begin{equation}
            \left\langle H_0^2\right\rangle =\sum_{\Lambda_1\cap\Lambda_2\not =\emptyset} \left\langle h_{\Lambda_1}h_{\Lambda_2}\right\rangle\leq\sum_{\Lambda_1\cap\Lambda_2\not =\emptyset} \| h_{\Lambda_1}\| \| h_{\Lambda_2}\|,
        \end{equation}since $\left\langle h_{\Lambda_1}h_{\Lambda_2}\right\rangle=\left\langle h_{\Lambda_1}\right\rangle\left\langle h_{\Lambda_2}\right\rangle=0$ for $\Lambda_1\cap\Lambda_2=\emptyset$. Therefore, in order to bound $\left\langle H_0^r\right\rangle$, we take the sum over the collections of $r$ regions $\Lambda_1,\dots,\Lambda_r$ such that any region does not have empty intersection with all the others:
\begin{equation}\label{eq:H^r}
    \left\langle H_0^r\right\rangle\leq \sum_{(\Lambda_1,...,\Lambda_r)\in \gR_r} \| h_{\Lambda_1}\|\cdots \|h_{\Lambda_r}\|,
\end{equation}
where $\gR_r=\{(\Lambda_1,...,\Lambda_r): \Lambda_i\cap (\cup_{j\not =i} \Lambda_j)\not =\emptyset, \forall i=1,...,r\}$.
For any graph $G=(V,E)$ with $V=\{1,...r\}$ let  
\begin{equation}\label{eq:C_G}
    C_G:=\sum_{\Lambda_1,...,\Lambda_r}\prod_{(i,j)\in E}|\Lambda_i\cap\Lambda_j| \| h_{\Lambda_1}\|\cdots \|h_{\Lambda_{r}}\|.
\end{equation}
Since $|\Lambda_i\cap\Lambda_j|\geq 1$ whenever $\Lambda_i\cap\Lambda_j\not =\emptyset$, we have $\left\langle H_0^r\right\rangle\leq \sum_{G\in\mathcal G_r} C_G$ where $\mathcal G_r=\{(V,E): |V|=r\, ,\, \deg(v)\geq 1\, \forall v\in V\}$. In words, we associate any non-empty intersection among two regions of the collections $\Lambda_1,...,\Lambda_r$ to an edge of a graph $G$ with $r$ vertices, in this sense we take the sum only over the graphs with no isolated vertices.  
Furthermore, the bound over $\left\langle H_0^r\right\rangle$  can be improved considering the {minimal graphs}. A {\em minimal graph} is defined as a graph such that the removal of any edge leaves at least one isolated vertex.
For any $(\Lambda_1,...,\Lambda_r)\in\gR_r$ there is a graph $G\in\mathcal G_r$ of $r$ vertices such that $(i,j)\in E$ if and only if $\Lambda_i\cap \Lambda_j\not =\emptyset$. Let $G'$ be a minimal subgraph of $G$, then:
\begin{equation}
    \|h_{\Lambda_1}\|\cdots\|h_{\Lambda_r}\|\leq \prod_{(i,j)\in E'}|\Lambda_i\cap\Lambda_j|\|h_{\Lambda_1}\|\cdots\|h_{\Lambda_r}\|.
\end{equation}

Now, let $\mathfrak G_r:=\{\textnormal{minimal graphs with $r$ vertices}\}$, for bounding $ \left\langle H_0^r\right\rangle$ we can take the sum only on the minimal graphs:

\begin{equation}\label{eq:H^r_1}
    \left\langle H_0^r\right\rangle \leq   \sum_{G\in\gG_r } C_G.
\end{equation}
Therefore:
\begin{equation}\label{eq:expH}
\left\langle e^{H_0} \right\rangle =\sum_{r=0}^{+\infty} \frac{\left\langle H_0^r\right\rangle}{r!}\leq \sum_{r=0}^{+\infty}  \sum_{G\in\gG_{r}} \frac{C_G}{r!} 
=\sum_{G\in\gG_{min}}  \frac{C_G}{V_G!},
\end{equation}
where $\gG_{min}$ is the set of all the minimal graph with a finite set of vertices and $V_G$ is the number of vertices of $G$.
We claim that any connected minimal graph is a star graph, \emph{i.e.}, has a vertex that belongs to all the edges.
Indeed, let $v$ be a vertex with maximum degree.
If such degree is $1$, then all vertices have degree $1$, and the graph is made by $2$ vertices connected by an edge.
Let the maximum degree be at least $2$.
Any vertex $w$ that is connected with $v$ by an edge cannot be connected to any other vertex, otherwise the edge $(v,w)$ could be removed without disconnecting any vertex.
Therefore, $v$ is connected by an edge to all the other vertices and the graph is a star graph.
Then, any minimal graph is the union of star graphs. Given a minimal graph with $V_G$ vertices, let us denote the number of connected components with $r$ vertices as $N_r$, then $V_G=\sum_{r\geq 2} rN_r$. The contribution $C_r$ in \eqref{eq:H^r_1} of the star graphs with $r$ vertices is:
\begin{equation}\label{eq:defC_r}
   C_r:= \sum_{\Lambda_1,\dots,\Lambda_r}|\Lambda_1\cap\Lambda_2|\cdots |\Lambda_1\cap \Lambda_r| \,\| h_{\Lambda_1}\|\cdots \|h_{\Lambda_r}\|,
\end{equation}
within the choice that $\Lambda_1$ corresponds to the core vertex. Let us consider a minimal graph $G$ formed by $N_r$ connected components, for $r\geq 2$. By definitions \eqref{eq:C_G} and \eqref{eq:defC_r}, we have:
\begin{equation}
    C_G=\prod_{r\geq 2} C_r^{N_r}.
\end{equation}
As a consequence, the last term in \eqref{eq:expH} can be re-written as a sum over the connected components:

\begin{align}\label{eq:sumC_G}
        \sum_{G\in\gG_{min}} \frac{C_G}{V_G!}
        =&\sum_{N_2,N_3,...}  \frac{1}{(2N_2+3N_3+\cdots)!} \times \frac{(2N_2+3N_3+\cdots)!}{N_2!2^{N_2}\,N_3!2!^{N_3}\,N_4!3!^{N_4}\cdots} \times{C_2^{N_2}C_3^{N_3}C_4^{N_4}\cdots}\nonumber\\
        =&\sum_{N_2}\frac{C_2^{N_2}}{2^{N_2}N_2!} \sum_{N_3,N_4...} \frac{C_3^{N_3}C_4^{N_4}\cdots}{\, 2!^{N_3}N_3! \,3!^{N_4}N_4!\,\cdots} = \exp\left[\frac{C_2}{2}+\sum_{r=3}^\infty \frac{C_r}{(r-1)!}\right].
\end{align} 
In \eqref{eq:sumC_G}, the factor $\frac{(2N_2+3N_3+\cdots)!}{N_2!2^{N_2}\,N_3!2!^{N_3}\,N_4!3!^{N_4}\cdots}$ is the number of different graphs corresponding to the fixed values of $N_i$ for every $i\geq 2$. In fact, a star graph with $r>2$ vertices is invariant w.r.t. all the $(r-1)!$ permutations of the external vertices (and a graph with 2 vertices is trivially invariant under the swap of its vertices). Moreover, a graph with $N_r$ connected components of $r$ vertices is invariant w.r.t. all the $N_r!$ permutations of these connected components. So, we take the quotient of the number of all the permutation of the $\sum_{r\geq 2} rN_r$ vertices over the total number of invariant operations.
Therefore, \eqref{eq:sumC_G} provides this bound: 
\begin{equation}\label{eq:C_r0}
    \left\langle e^{H_0} \right\rangle\leq \exp\left[\frac{C_2}{2}+\sum_{r=3}^\infty \frac{C_r}{(r-1)!}\right].
\end{equation}
By definition \eqref{eq:defC_r}, $C_r$ can be written in this form:
\begin{equation}\label{eq:3.11}
    C_r=\sum_{\Lambda_1,\dots,\Lambda_r}\sum_{v_2\in\Lambda_1\cap\Lambda_2}\cdots\sum_{v_r\in\Lambda_1\cap \Lambda_r} \| h_{\Lambda_1}\|\cdots \|h_{\Lambda_r}\|,
\end{equation}
because $\sum_{v_i\in\Lambda_1\cap\Lambda_i}\| h_{\Lambda_1}\|\cdots \|h_{\Lambda_r}\|=|\Lambda_1\cap\Lambda_i|\| h_{\Lambda_1}\|\cdots \|h_{\Lambda_r}\|$. Furthermore, since:
\begin{equation}
\sum_{\Lambda_i}\sum_{v_i\in\Lambda_1\cap\Lambda_i}\|h_{\Lambda_i}\|=\sum_{v_i\in\Lambda_1}\sum_{\Lambda_i\ni v_i}\|h_{\Lambda_i}\|\qquad \forall i=2,\dots,r,
\end{equation}
by \eqref{eq:3.11}, we have:
\begin{equation}\label{eq:C_r}
    C_r=\sum_{\Lambda_1} \sum_{v_2,...,v_r\in\Lambda_1}\sum_{\Lambda_2\ni v_2}\dots \sum_{\Lambda_r\ni v_r} \| h_{\Lambda_1}\|\cdots \|h_{\Lambda_r}\|
    = \sum_{\Lambda_1}\|h_{\Lambda_1}\|\left(\sum_{v_2\in\Lambda_1}\sum_{\Lambda_2\ni v_2} \|h_{\Lambda_2}\|\right)^{r-1}.
\end{equation}

In view of \autoref{lem:loc} we have $\sum_{\Lambda_2\ni v_2} \|h_{\Lambda_2}\|\leq \|H\|_{\textnormal{loc}}$ and observing that
\begin{equation}
    \sum_{v_2\in\Lambda_1}\|H\|_{\textnormal{loc}}=|\Lambda_1|\|H\|_{\textnormal{loc}}\,,
\end{equation}
from \eqref{eq:C_r} we obtain:
\begin{equation}\label{eq:C_r_bound}
    C_r\leq\sum_{\Lambda_1}\|h_{\Lambda_1}\|\left(\sum_{v_2\in\Lambda_1}\sum_{\Lambda_2\ni v_2} \|h_{\Lambda_2}\|\right)^{r-1}\leq\sum_{\Lambda\subseteq[n]} |\Lambda|^{r-1} \| H\|^{r-1}_{\textnormal{loc}} \| h_\Lambda\|\qquad \forall r\geq 2,
\end{equation}
then, by substituting in \eqref{eq:C_r0}:
\begin{align}
\left\langle e^{H_0} \right\rangle & \leq \exp\left[\frac{C_2}{2} +\sum_{r=3}^\infty \sum_{\Lambda\subseteq[n]} \frac{|\Lambda|^{r-1} \| H\|^{r-1}_{\textnormal{loc}} \| h_\Lambda\|}{(r-1)!} \right]= \\ \nonumber
&=\exp\left[\frac{C_2}{2} +\sum_{\Lambda\subseteq[n]} \left(e^{|\Lambda| \| H\|_{\textnormal{loc}}}-1-|\Lambda|\|H\|_{\textnormal{loc}}\right ) \| h_\Lambda\|\right]  \leq\\ \nonumber
& \leq \exp\left[\sum_{\Lambda\subseteq[n]} \left(e^{|\Lambda| \| H\|_{\textnormal{loc}}}-1-\frac{|\Lambda|\|H\|_{\textnormal{loc}}}{2}\right ) \| h_\Lambda\|\right],
\end{align}
where, in the last inequality, we have used the fact $C_2\leq\sum_{\Lambda\subseteq[n]}|\Lambda|\|H\|_{\textnormal{loc}} \| h_\Lambda\|$. The claim follows from $\left\langle e^H\right\rangle=e^{\left\langle H\right\rangle}\left\langle e^{H_0}\right\rangle$.
  \end{proof}  

The inequality \eqref{eq:concentration1} stated by the theorem above provides a bound that is not a function of the local norm of $H$ alone. However, such a bound can be obtained with the additional requirement that the size of the regions, on which $H$ acts locally, satisfies $|\Lambda|\leq k$.

\sp

\begin{cor}\label{prop:expH}
    Under the hypotheses of \autoref{thm:bound1}, if $H$ is $k$-local then
    \begin{equation}\label{eq:expH2}
        \left\langle e^{H}\right\rangle \leq \exp\left[\left\langle H\right\rangle+\frac{n}{k}\left({e^{k \| H\|_{\textnormal{loc}}}-1-\frac{k\|H\|_{\textnormal{loc}}}{2}}\right)   \| H\|_{\textnormal{loc}}\right]
    \end{equation}
\end{cor}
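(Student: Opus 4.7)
The plan is to specialize the bound of \autoref{thm:bound1} by exploiting the constraint $|\Lambda|\le k$ and pulling the $\Lambda$-dependence out of the exponent. Writing $L:=\|H\|_{\textnormal{loc}}$ and $f(x):=e^{xL}-1-xL/2$, \autoref{thm:bound1} gives
\begin{equation*}
\left\langle e^{H}\right\rangle \le \exp\left[\langle H\rangle + \sum_{\Lambda\subseteq[n]} f(|\Lambda|)\,\|h_\Lambda\|\right],
\end{equation*}
so the whole job is to show that the $\Lambda$-sum is at most $\frac{n}{k}\,f(k)\,L$.

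First I would establish that $x\mapsto f(x)/x$ is non-decreasing on $(0,\infty)$. The cleanest route is the power-series expansion
\begin{equation*}
\frac{f(x)}{x}=\frac{L}{2}+\sum_{r\ge 2}\frac{L^r x^{r-1}}{r!},
\end{equation*}
which is manifestly non-decreasing in $x\ge 0$ because every term is non-negative and non-decreasing. Consequently, for any $|\Lambda|\in\{1,\dots,k\}$ we have $f(|\Lambda|)\le (|\Lambda|/k)\,f(k)$, which immediately yields
\begin{equation*}
\sum_{\Lambda\subseteq[n]} f(|\Lambda|)\,\|h_\Lambda\| \;\le\; \frac{f(k)}{k}\sum_{\Lambda\subseteq[n]}|\Lambda|\,\|h_\Lambda\|.
\end{equation*}

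Next I would swap the order of summation in $\sum_{\Lambda}|\Lambda|\,\|h_\Lambda\|=\sum_{x\in[n]}\sum_{\Lambda\ni x}\|h_\Lambda\|$ and invoke \autoref{lem:loc} (the same lemma used inside the proof of \autoref{thm:bound1}), which gives $\sum_{\Lambda\ni x}\|h_\Lambda\|\le L$ for every spin $x$; summing over the $n$ spins yields
\begin{equation*}
\sum_{\Lambda\subseteq[n]}|\Lambda|\,\|h_\Lambda\|\;\le\; n\,L.
\end{equation*}
Combining the two inequalities gives $\sum_\Lambda f(|\Lambda|)\|h_\Lambda\|\le \frac{n}{k}\,f(k)\,L$, and plugging this back into the bound from \autoref{thm:bound1} produces \eqref{eq:expH2}.

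This is essentially routine given \autoref{thm:bound1}: the only mildly technical point is the monotonicity of $f(x)/x$, which the power-series expansion dispatches in one line. The $k$-locality hypothesis is used only to cap $|\Lambda|\le k$ in that monotonicity step; the reduction $\sum_\Lambda|\Lambda|\|h_\Lambda\|\le nL$ is just a rewriting via the definition of the local norm.
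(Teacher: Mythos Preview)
Your proof is correct and follows essentially the same approach as the paper. Both arguments use the monotonicity of $x\mapsto f(x)/x$ to replace $f(|\Lambda|)$ by $\frac{|\Lambda|}{k}f(k)$, swap $\sum_\Lambda|\Lambda|\,\|h_\Lambda\|=\sum_{x\in[n]}\sum_{\Lambda\ni x}\|h_\Lambda\|$, and then apply \autoref{lem:loc}; the only difference is that you make the monotonicity step explicit via the power-series expansion, whereas the paper simply invokes $|\Lambda|\le k$ without spelling out why that suffices.
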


\begin{proof}
    By \autoref{thm:bound1}, we have:
      \begin{align}\label{eq:thm3.2}
                  \left\langle e^H\right\rangle & \leq \exp\left[\left\langle H\right\rangle+\sum_{\Lambda\subseteq[n]}\sum_{v\in\Lambda} \frac{1}{|\Lambda|}\left(e^{|\Lambda| \| H\|_{\textnormal{loc}}}-1-\frac{|\Lambda|\|H\|_{\textnormal{loc}}}{2}\right ) \| h_\Lambda\|\right] \\ \nonumber
        &\leq \exp\left[\left\langle H\right\rangle+ \frac{1}{k}\left(e^{k \| H\|_{\textnormal{loc}}}-1-\frac{k\|H\|_{\textnormal{loc}}}{2}\right ) \sum_{\Lambda\subseteq[n]}\sum_{v\in\Lambda} \| h_\Lambda\|\right]\\ \nonumber
        &= \exp\left[\left\langle H\right\rangle+ \frac{1}{k}\left(e^{k \| H\|_{\textnormal{loc}}}-1-\frac{k\|H\|_{\textnormal{loc}}}{2}\right ) \sum_{v\in[n]}\sum_{\Lambda \ni v} \| h_\Lambda\|\right]\\ \nonumber
        &\leq \exp\left[\left\langle H\right\rangle+ \frac{1}{k}\left(e^{k \| H\|_{\textnormal{loc}}}-1-\frac{k\|H\|_{\textnormal{loc}}}{2}\right ) \sum_{v\in[n]} \| H\|_{\textnormal{loc}}\right]\\ \nonumber
        & = \exp\left[\left\langle H\right\rangle+ \frac{1}{k}\left(e^{k \| H\|_{\textnormal{loc}}}-1-\frac{k\|H\|_{\textnormal{loc}}}{2}\right ) n \| H\|_{\textnormal{loc}}\right]\,.
      \end{align}
The first inequality in \eqref{eq:thm3.2} is implied by the hypothesis $| \Lambda|\leq k$ and the second one is a consequence of $\sum_{\Lambda\ni v}\|h_\Lambda\|\leq \|H\|_{\textnormal{loc}}$ for every site $v$.
\end{proof}

\autoref{prop:expH} can be used to derive an exponential concentration inequality via the inequality
\begin{equation}\label{eq:Markov}
    \mathbb P(H\geq na)\leq e^{-t na} \left\langle e^{tH}\right\rangle\qquad \forall\;t,\,a>0\,:
\end{equation}

\sp

\begin{thm}\label{thm:concentrationproduct}
    Let $\omega$ be a product state of $n$ spins and $H$ be a $k$-local Hamiltonian. Then, for any $a>0$ we have
    \begin{equation}\label{eq:thm3.3}
         \mathbb P\left(H\geq \left\langle H\right\rangle + na\right)\leq \exp\left[-\frac{n}{k^2} \,F\left(\frac{ak}{\|H\|_{\textnormal{loc}}}\right)   \right],
    \end{equation}
    where $F(x):=\max_{s>0} s(x- e^s+1+\frac{s}{2})$.
\end{thm}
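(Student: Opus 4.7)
The plan is to apply the Markov-type inequality \eqref{eq:Markov} with the exponential moment bound from \autoref{prop:expH} and then optimize over the free parameter. Concretely, for any $t>0$ I would apply \autoref{prop:expH} to the rescaled Hamiltonian $tH$, which is still $k$-local and satisfies $\|tH\|_{\textnormal{loc}} = t\|H\|_{\textnormal{loc}}$ and $\langle tH\rangle = t\langle H\rangle$. This yields
\begin{equation*}
\langle e^{tH}\rangle \;\le\; \exp\!\left[t\langle H\rangle + \frac{n\,t\,\|H\|_{\textnormal{loc}}}{k}\!\left(e^{kt\|H\|_{\textnormal{loc}}}-1-\tfrac{kt\|H\|_{\textnormal{loc}}}{2}\right)\right].
\end{equation*}
Plugging this into \eqref{eq:Markov} applied to the centred deviation $H - \langle H\rangle \ge na$, i.e. using the Markov inequality $\mathbb P(H\ge \langle H\rangle + na)\le e^{-t(\langle H\rangle + na)}\langle e^{tH}\rangle$, the $t\langle H\rangle$ terms cancel and the exponent becomes
\begin{equation*}
-tna + \frac{n\,t\,\|H\|_{\textnormal{loc}}}{k}\!\left(e^{kt\|H\|_{\textnormal{loc}}}-1-\tfrac{kt\|H\|_{\textnormal{loc}}}{2}\right).
\end{equation*}

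The next step is the change of variables $s:=k\,t\,\|H\|_{\textnormal{loc}}$, which is a bijection from $t>0$ onto $s>0$. Under this substitution, $tna = \frac{n a s}{k\|H\|_{\textnormal{loc}}}$ and $\frac{nt\|H\|_{\textnormal{loc}}}{k} = \frac{n s}{k^2}$, so the exponent rewrites as
\begin{equation*}
-\frac{n}{k^2}\, s\!\left(\frac{ak}{\|H\|_{\textnormal{loc}}} - e^s + 1 + \frac{s}{2}\right).
\end{equation*}
Since the bound holds for every $t>0$, it holds after taking the infimum, equivalently, the supremum of the negated quantity over $s>0$; this is exactly $F\!\left(ak/\|H\|_{\textnormal{loc}}\right)$ by the definition of $F$ in the statement, yielding \eqref{eq:thm3.3}.

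I do not anticipate a real obstacle here: the proof is a textbook Chernoff-style argument, and all the non-trivial work has already been absorbed into \autoref{thm:bound1} and \autoref{prop:expH}. The only point that requires minor care is the rescaling, namely verifying that $\|tH\|_{\textnormal{loc}} = t\|H\|_{\textnormal{loc}}$ for $t>0$ (which is immediate from \autoref{defn:local}, since each admissible decomposition of $H$ induces the corresponding one of $tH$ with all local norms scaled by $t$) and that the $k$-locality is preserved under multiplication by a scalar. Once these are noted, the result follows by the substitution above without any further estimates.
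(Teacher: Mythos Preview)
Your proposal is correct and follows essentially the same Chernoff-type argument as the paper: apply \autoref{prop:expH} to $tH$, insert into the Markov bound \eqref{eq:Markov}, substitute $s=kt\|H\|_{\textnormal{loc}}$, and optimize over $s$. The only cosmetic difference is that the paper first reduces to $\langle H\rangle=0$ while you carry the mean through and let it cancel; the content is identical.
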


\begin{proof}
Without loss of generality, we can assume $\left\langle H\right\rangle=0$.
    First, we apply \autoref{prop:expH} inserting \eqref{eq:expH2} into \eqref{eq:Markov}:
    \begin{equation}
        \mathbb P(H\geq na)\leq\exp\left[-tna +\frac{nt}{k}\left(e^{tk\|H\|_{\textnormal{loc}}}-1-\frac{tk\|H\|_{\textnormal{loc}}}{2}\right)\|H\|_{\textnormal{loc}}   \right].
    \end{equation}
Let us apply the substitution $s=tk\|H\|_{\textnormal{loc}}$: 
 \begin{equation}\label{eq:P}
        \mathbb P(H\geq na)\leq\exp\left[-\frac{n}{k^2} s\left(\frac{ak}{\|H\|_{\textnormal{loc}}} - e^s+1+\frac{s}{2}  \right)  \right]
    \end{equation}
    and, in turn, perform the minimization over $s$ in the right-hand side of \eqref{eq:P} obtaining \eqref{eq:thm3.3}.
\end{proof}

\begin{rem}
    The function $F$ in \eqref{eq:thm3.3} is strictly increasing with $F(0)=0$, then it is strictly positive for any $x>0$. 
\end{rem}

\vspace{0.25cm}

\modifica{

\begin{rem}\label{rem_1}
    The function $F$ in \eqref{eq:thm3.3} is defined to keep the bound as tight as possible. By definition of $F$, the maximizer $s^*$ of $f(s)=s(x-e^s+1+s/2)$ is a function of $x\geq 0$ which does not admit an explicit form in terms of elementary functions. However, one can suppose to approximate $s^*$ by a suitable elementary function in order to explicitly obtain a function $\widetilde F:[0,+\infty)\rightarrow [0,+\infty)$, with $\widetilde F(x)\le F(x)$ for every $x\geq 0$, which provides a suboptimal but more explicit bound in \autoref{eq:thm3.3}. 
    %For instance, since $F$ is superpolynomial, we can determine a polynomial function which definitely minorates $F$ for any $x>B$ for some $B>0$. 
    For any fixed $x\geq 0$, the optimum $s^*$ of $f$ satisfies:
    $$x=(e^{s^*}-1)(s^*+1).$$
    A rough approximation of the inverse of $x(s^*)=(e^{s^*}-1)(s^*+1)$ is $\hat s^*(x)=\log(x+1)$ that provides the {\em suboptimal} version of $F$ by:
    $$\widetilde F(x)=f(\hat s^*(x))=\log(x+1)\left[x-e^{\log(x+1)}+1+\frac{1}{2}\log(x+1)\right]=\frac{1}{2}\log^2(x+1)$$
    Then, we get a worse but more explicit bound than the one given in \autoref{eq:thm3.3}:
    \begin{equation}
         \mathbb P\left(H\geq \left\langle H\right\rangle + na\right)\leq \exp\left[-\frac{n}{2k^2} \,\log^2\left(\frac{ak}{\|H\|_{\textnormal{loc}}}+1\right)   \right],
    \end{equation}
    so we can easily note that the concentration weakens as the locality $k$ of $H$ increases. 
\end{rem}}

\modifica{
\subsubsection{Comparison with \cite[Corollary 3]{de2021quantum}}
Ref. \cite{de2021quantum} proves the following concentration inequality for the maximally mixed state:

\sp

\begin{thm}[{\cite[Corollary 3]{de2021quantum}}]
    Let $\omega$ be the maximally mixed state of $n$ spins. Then, for any $H\in\mathcal{O}_{[n]}$ and any $a>0$ we have
    \begin{equation}\label{eq:concL}
         \mathbb P\left(H\geq \left\langle H\right\rangle + na\right)\leq \exp\left[-\frac{2\,n\,a^2}{\left\|H\right\|_L^2}\right],
    \end{equation}
    where
    \begin{equation}
        \left\|H\right\|_L = 2\max_{x\in[n]}\min\left\{\left\|H - H_{[n]\setminus\{x\}}\otimes\mathbb{I}_x\right\| : H_{[n]\setminus\{x\}}\in\mathcal{O}_{H_{[n]\setminus\{x\}}}\right\}
    \end{equation}
    is the quantum Lipschitz constant of $H$.
\end{thm}

\sp

For the maximally mixed state, \cite[Corollary 3]{de2021quantum} is stronger than \autoref{thm:concentrationproduct}.
Indeed, both results imply a concentration that is exponential in $n$ whenever $\left\|H\right\|_L$ and $\left\|H\right\|_{\mathrm{loc}}$, respectively, are $O(1)$.
From \cite[Proposition II.5]{depalma2023classical} we have $\left\|H\right\|_L \le \left\|H\right\|_{\mathrm{loc}}$, hence \cite[Corollary 3]{de2021quantum} guarantees an exponential concentration for a larger family of Hamiltonians.
However, the fundamental improvement of \autoref{thm:concentrationproduct} is that it is valid not only for the maximally mixed state but for any product state, including pure product states.
The validity of \eqref{eq:concL} for generic product states remains an open challenge.
}

\subsection{Quantum states with finite correlation length}\label{ssec:finitecl}
We now consider the quantum states with exponentially decaying correlations characterized by a finite correlation length $\xi$ in the sense of \autoref{ass:length}. Also in this case, we consider the decomposition of a local Hamiltonian $H$ that achieves the local norm $H=\sum_{\Lambda\subseteq[n]} \tilde h_\Lambda$ and define $h_\Lambda := \tilde{h}_\Lambda - \left\langle\tilde{h}_\Lambda\right\rangle$ for any $\emptyset\neq\Lambda\subseteq[n]$. 
Moreover, under \autoref{ass:ball}, for any region $\Lambda$, let us define its \emph{enlargement} by $l>0$ as $\widetilde\Lambda :=\{x: \mathsf d(x,\Lambda)\leq l\}$.
We can now apply to the enlarged regions the argument that we used for product states. In the new case, for $\widetilde\Lambda_1\cap \widetilde\Lambda_2 =\emptyset$ we have
\begin{equation}
|\left\langle h_{\Lambda_1}h_{\Lambda_2}\right\rangle|\leq \left\|h_{\Lambda_1}\right\|\left\|h_{\Lambda_2}\right\|C\,e^{-\frac{\mathsf d(\Lambda_1,\Lambda_2)}{\xi}}\leq \left\|h_{\Lambda_1}\right\|\left\|h_{\Lambda_2}\right\|C\,e^{-\frac{l}{\xi}}\,,
\end{equation}
where $C>0$ and $\xi$ is the correlation length. Therefore, for a collections of regions $\Lambda_1,...,\Lambda_r$ such that $\widetilde\Lambda_i\cap (\cup_{j\not =i} \widetilde\Lambda_j) =\emptyset$ for some $i\in[r]$, we have
\begin{equation}\label{eq:bound_xi}
    |\left\langle h_{\Lambda_1}\cdots h_{\Lambda_r}\right\rangle|\leq \| h_{\Lambda_1}\|\cdots \|h_{\Lambda_r}\| C e^{-\frac{l}{\xi}}.
\end{equation}

\begin{thm}\label{thm:bound2}
     Let $\omega\in\mathcal S_{[n]}$ be a state of a quantum system of $n$ spins with correlation length $\xi$ as in \autoref{ass:length} with respect to a distance satisfying \autoref{ass:ball}.
     Then, for any $k$-local Hamiltonian $H$ we have
     \begin{equation}\label{eq:e^Hcorr}
        \left\langle e^H\right\rangle \leq \exp\left[\left\langle H\right\rangle+\frac{n}{k}\left({e^{k\, (Al^{d})^2 \| H\|_{\textnormal{loc}}}-1-\frac{k \,(Al^{d})^2\|H\|_{\textnormal{loc}}}{2}}\right)   \| H\|_{\textnormal{loc}}\right]+Ce^{\left\langle H\right\rangle+n\|H\|_{\textnormal{loc}}-\frac{l}{\xi}}\,,
 \end{equation}
 where the angle brackets denote the expectation with respect to $\omega$.
    \end{thm}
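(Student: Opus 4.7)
The plan is to imitate the proof of \autoref{thm:bound1}/\autoref{prop:expH}, but with the role of ``spatial overlap'' played by the overlap of the $l$-enlargements $\widetilde\Lambda=\{x:\mathsf d(x,\Lambda)\le l\}$, and with an extra additive error term for the configurations that are genuinely well-separated. Concretely, I would again write $h_\Lambda=\tilde h_\Lambda-\langle\tilde h_\Lambda\rangle$ and $H_0=\sum_{\emptyset\neq\Lambda\subseteq[n]}h_\Lambda$, so that $\langle e^H\rangle=e^{\langle H\rangle}\langle e^{H_0}\rangle$, and expand $\langle e^{H_0}\rangle=\sum_{r\ge 0}\langle H_0^r\rangle/r!$. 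For each $r$ I split the multi-sum for $\langle H_0^r\rangle$ into two classes of tuples $(\Lambda_1,\dots,\Lambda_r)$: the class $\widetilde{\mathfrak R}_r$ where $\widetilde\Lambda_i\cap\bigl(\bigcup_{j\neq i}\widetilde\Lambda_j\bigr)\neq\emptyset$ for every $i$, and its complement, in which at least one $\widetilde\Lambda_i$ is $l$-separated from the rest.

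For the complement, I invoke \eqref{eq:bound_xi} to get $|\langle h_{\Lambda_1}\cdots h_{\Lambda_r}\rangle|\le C\,e^{-l/\xi}\,\|h_{\Lambda_1}\|\cdots\|h_{\Lambda_r}\|$ and apply the crude bound $\sum_\Lambda\|h_\Lambda\|\le n\|H\|_{\textnormal{loc}}$ (via $\|h_\Lambda\|\le\sum_{v\in\Lambda}\|h_\Lambda\|/|\Lambda|\cdot|\Lambda|$ and the definition of the local norm). Summing over $r$ produces the error contribution $C\,e^{n\|H\|_{\textnormal{loc}}-l/\xi}$ to $\langle e^{H_0}\rangle$, which multiplied by $e^{\langle H\rangle}$ gives the additive error in \eqref{eq:e^Hcorr}.

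For the main class $\widetilde{\mathfrak R}_r$, I use the trivial bound $|\langle h_{\Lambda_1}\cdots h_{\Lambda_r}\rangle|\le\|h_{\Lambda_1}\|\cdots\|h_{\Lambda_r}\|$ and run the minimal-graph/star-graph combinatorics of \autoref{thm:bound1} \emph{verbatim}, with edges defined by $\widetilde\Lambda_i\cap\widetilde\Lambda_j\neq\emptyset$. This yields the same exponential sum $\exp\bigl[C_2'/2+\sum_{r\ge 3}C_r'/(r-1)!\bigr]$, with the star contribution
\begin{equation*}
C_r' \;=\; \sum_{\Lambda_1}\|h_{\Lambda_1}\|\Bigl(\sum_{\Lambda_2}|\widetilde\Lambda_1\cap\widetilde\Lambda_2|\,\|h_{\Lambda_2}\|\Bigr)^{r-1}.
\end{equation*}
The key new estimate is then $\sum_{\Lambda_2}|\widetilde\Lambda_1\cap\widetilde\Lambda_2|\,\|h_{\Lambda_2}\|\le\sum_{v\in\widetilde\Lambda_1}\sum_{w\in B_{\mathsf d}(v,l)}\sum_{\Lambda_2\ni w}\|h_{\Lambda_2}\|$, which by \autoref{ass:ball} and the local-norm bound is at most $|\widetilde\Lambda_1|\cdot Al^d\cdot\|H\|_{\textnormal{loc}}\le|\Lambda_1|\,(Al^d)^2\,\|H\|_{\textnormal{loc}}$. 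Thus $C_r'\le\sum_\Lambda|\Lambda|^{r-1}((Al^d)^2\|H\|_{\textnormal{loc}})^{r-1}\|h_\Lambda\|$, exactly the inequality \eqref{eq:C_r_bound} with $\|H\|_{\textnormal{loc}}$ replaced by $(Al^d)^2\|H\|_{\textnormal{loc}}$.

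Plugging this into the series and repeating the $k$-locality reduction of \autoref{prop:expH} (which is exactly the monotonicity of $t\mapsto (e^t-1-t/2)/t$ together with $\sum_{\Lambda\ni v}\|h_\Lambda\|\le\|H\|_{\textnormal{loc}}$) produces the first summand in \eqref{eq:e^Hcorr}; adding the additive error from the complement and multiplying by $e^{\langle H\rangle}$ yields the theorem. The main obstacle is making the enlarged-region graph argument precise: one has to check that the reduction to minimal/star graphs still works when edges are defined via $\widetilde\Lambda_i\cap\widetilde\Lambda_j\neq\emptyset$ rather than $\Lambda_i\cap\Lambda_j\neq\emptyset$, and to track carefully the two factors of $Al^d$ (one from $|\widetilde\Lambda_1|\le|\Lambda_1|Al^d$ and one from the inner ball $B_{\mathsf d}(v,l)$) that produce the $(Al^d)^2$ appearing in the exponent.
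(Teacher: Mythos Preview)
Your proposal is correct and follows essentially the same route as the paper's proof: the same split of $\langle H_0^r\rangle$ into $\widetilde{\mathfrak R}_r$ and its complement, the same crude bound $C e^{n\|H\|_{\textnormal{loc}}-l/\xi}$ on the well-separated part, the same star-graph combinatorics on the main part with edges given by $\widetilde\Lambda_i\cap\widetilde\Lambda_j\neq\emptyset$, and the same two-factor estimate $\sum_{\Lambda:v\in\widetilde\Lambda}\|h_\Lambda\|\le Al^d\|H\|_{\textnormal{loc}}$ together with $|\widetilde\Lambda|\le|\Lambda|Al^d$ to produce the $(Al^d)^2$. The ``obstacle'' you flag is not a real one: the reduction to minimal/star graphs in \autoref{thm:bound1} uses only that a nonempty intersection has cardinality at least $1$, so it carries over to enlarged regions without modification.
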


\begin{proof}
   Let $H=\sum_{\Lambda\subseteq[n]} \tilde h_\Lambda$ be the decomposition of $H$ that achieves the local norm as in \eqref{eq:loc} and define $h_\Lambda := \tilde{h}_\Lambda - \left\langle\tilde{h}_\Lambda\right\rangle$ for any $\emptyset\neq\Lambda\subseteq[n]$ and $H_0:=\sum_{\emptyset\neq\Lambda\subseteq[n]} h_\Lambda$. Then, \eqref{eq:HH_0} is still valid and $\left\langle e^H\right\rangle=e^{\left\langle H\right\rangle}\left\langle e^{H_0}\right\rangle$. Adapting the inequality \eqref{eq:H^r} to enlarged regions $\{\widetilde \Lambda\}_{\Lambda\subseteq [n]}$ and considering \eqref{eq:bound_xi}, the expectation of $H_0^r$ can be bounded as follows:
  \begin{equation}\label{eq:H^r2}
      \left\langle H_0^r\right\rangle \leq \sum_{(\Lambda_1,\dots,\Lambda_r)\in\widetilde\gR_r} \|h_{\Lambda_1}\|\cdots\|h_{\Lambda_r}\|+\sum_{(\Lambda_1,\dots,\Lambda_r)\not\in\widetilde\gR_r}\|h_{\Lambda_1}\|\cdots\|h_{\Lambda_r}\| Ce^{-\frac{l}{\xi}},
  \end{equation}
where $\widetilde\gR_r=\{(\Lambda_1,...,\Lambda_r): \widetilde\Lambda_i\cap (\cup_{j\not =i} \widetilde\Lambda_j)\not =\emptyset\;\forall\, i=1,...,r  \}$. Therefore: 
\begin{equation}\label{eq:e^H2}
    \left\langle e^{H_0}\right\rangle\leq \sum_{r=0}^{+\infty}\frac{1}{r!}\sum_{(\Lambda_1,\dots,\Lambda_r)\in\widetilde\gR_r} \|h_{\Lambda_1}\|\cdots\|h_{\Lambda_r}\|+\sum_{r=0}^{+\infty}\frac{1}{r!}\sum_{(\Lambda_1,\dots,\Lambda_r)\not\in\widetilde\gR_r}\|h_{\Lambda_1}\|\cdots\|h_{\Lambda_r}\| Ce^{-\frac{l}{\xi}}.
\end{equation}
Now, let us focus on the sum over $(\Lambda_1,\dots,\Lambda_r)\in\widetilde\gR_r$ in \eqref{eq:e^H2} and define:
\begin{equation}\label{}
   C_r:= \sum_{\Lambda_1,\dots,\Lambda_r}|\widetilde\Lambda_1\cap\widetilde\Lambda_2|\cdots |\widetilde\Lambda_1\cap \widetilde\Lambda_r| \,\| h_{\Lambda_1}\|\cdots \|h_{\Lambda_r}\|.
\end{equation}
so that we can adapt the inequality \eqref{eq:C_r_bound}:  
\begin{equation}
   C_r\leq\sum_{\Lambda_1}\|h_{\Lambda_1}\|\left(\sum_{v_2\in\widetilde\Lambda_1}\sum_{\Lambda_2: v_2\in\widetilde\Lambda_2} \|h_{\Lambda_2}\|\right)^{r-1}.
\end{equation}
In general, we have that $v\in\widetilde\Lambda$ if and only if $\Lambda \cap B_\mathsf d (v,l) \not = \emptyset$ where $B_\mathsf d (v,l)$ is the ball centered in $v$ with radius $l$ referred to the distance $\mathsf d$ used to define the enlargement of the regions. Therefore:
\begin{equation}\label{eq:B(v,l)}
    \sum_{\Lambda:v\in\widetilde\Lambda} \|h_{\Lambda}\|=\sum_{\Lambda:\Lambda \cap B_\mathsf d (v,l) \not = \emptyset} \|h_{\Lambda}\|\leq \sum_{\Lambda\subseteq[n]} |\Lambda\cap B_\mathsf d(v,l)|\,\|h_\Lambda\|,
\end{equation}
the last term in \eqref{eq:B(v,l)} can be rewritten as:
\begin{equation}
    \sum_{\Lambda\subseteq[n]} \sum_{w\in\Lambda\cap B_\mathsf d(v,l)} \|h_\Lambda\| =\sum_{w\in B_\mathsf d(v,l)}\sum_{\Lambda\ni w} \|h_\Lambda\|\leq |B_\mathsf d(v,l)|\, \|H\|_{\textnormal{loc}}\leq A\, l^d \,\|H\|_{\textnormal{loc}},
\end{equation}
where, in the last inequality, we have taken into account \autoref{ass:ball}. This results in the following variant of \eqref{eq:C_r_bound}:
\begin{equation}
    C_r\leq \sum_{\Lambda\subseteq[n]} \|h_\Lambda\|\,[(A l^{d})^2 |\Lambda| \|H\|_{\textnormal{loc}}]^{r-1}\qquad \forall r=2,\dots,r,
\end{equation}
where we have also used the fact that $|\widetilde\Lambda|\leq |\Lambda| A\, l^d$ for any $\Lambda\subseteq [n]$. The latter is a consequence of \eqref{eq:bound_ball} since $\widetilde\Lambda\subset \cup_{v\in\Lambda} B_{\mathsf d}(v,l)$, so $|\widetilde\Lambda|\leq \sum_{v\in\Lambda}|B_{\mathsf d}(v,l)|\leq \sum_{v\in\Lambda} A l^d=|\Lambda| A l^d$. Therefore we can state:
\begin{align}    \label{eq:in_gR}
\sum_{r=0}^\infty \frac{1}{r!} \!\!\!\! \sum_{\quad(\Lambda_1,\dots,\Lambda_r)\in\widetilde\gR_r} &\|h_{\Lambda_1}\|\cdots\|h_{\Lambda_r}\|\leq \\\nonumber
    &\leq\exp\left[\sum_{\Lambda\subseteq[n]} \left(e^{(A l^d)^2|\Lambda| \| H\|_{\textnormal{loc}}}-1-\frac{(A l^d)^2|\Lambda|\|H\|_{\textnormal{loc}}}{2}\right ) \| h_\Lambda\|\right],
\end{align}
requiring that $|\Lambda|\leq k$ for any $\Lambda\in[n]$, the proof of \autoref{prop:expH} can be repeated yielding: 
\begin{align}    
\sum_{r=0}^\infty \frac{1}{r!} \!\!\!\! \sum_{\quad(\Lambda_1,\dots,\Lambda_r)\in\widetilde\gR_r} &\|h_{\Lambda_1}\|\cdots\|h_{\Lambda_r}\|\leq \\\nonumber
    &\leq \exp\left[\frac{n}{k}\left({e^{k (Al^d)^2 \| H\|_{\textnormal{loc}}}-1-\frac{k (Al^d)^2\|H\|_{\textnormal{loc}}}{2}}\right)   \| H\|_{\textnormal{loc}}\right].
\end{align}
Now, let us bound the sum over $(\Lambda_1,\dots,\Lambda_r)\not\in\widetilde\gR_r$ in \eqref{eq:H^r2}:
\begin{equation}
    \sum_{(\Lambda_1,\dots,\Lambda_r)\not\in\widetilde\gR_r}\|h_{\Lambda_1}\|\cdots\|h_{\Lambda_r}\| C e^{-\frac{l}{\xi}}\leq  \left(\sum_{\Lambda\subseteq[n]} \|h_\Lambda\|\right)^r C e^{-\frac{l}{\xi}}\leq \left(n\| H\|_{\textnormal{loc}}\right)^r C e^{-\frac{l}{\xi}},
\end{equation}
then: 
\begin{equation} \label{eq:not_in_gR}
    \sum_{r=0}^\infty \frac{1}{r!}\sum_{(\Lambda_1,\dots,\Lambda_r)\not\in\widetilde\gR_r}\|h_{\Lambda_1}\|\cdots\|h_{\Lambda_r}\| C e^{-\frac{l}{\xi}}\leq \sum_{r=0}^{+\infty}\frac{\left(n\| H\|_{\textnormal{loc}}\right)^r}{r!} C e^{-\frac{l}{\xi}}\leq e^{n\|H\|_{\textnormal{loc}}}C e^{-\frac{l}{\xi}}.
\end{equation}
Taking into account \eqref{eq:e^H2}, \eqref{eq:in_gR}, and \eqref{eq:not_in_gR} we have:
 \begin{equation}
        \left\langle e^{H_0}\right\rangle \leq \exp\left[\frac{n}{k}\left({e^{k\, (Al^{d})^2 \| H\|_{\textnormal{loc}}}-1-\frac{k \,(Al^{d})^2\|H\|_{\textnormal{loc}}}{2}}\right)   \| H\|_{\textnormal{loc}}\right]+Ce^{n\|H\|_{\textnormal{loc}}-\frac{l}{\xi}},
 \end{equation}
that directly implies the claim.
\end{proof}

As in the case of product states, \autoref{thm:bound2} implies the following concentration inequality for the quantum states with finite correlation length:

\sp

\begin{cor}\label{thm:concentration2}
     Let $\omega$ be a state of $n$ quantum spins with correlation length $\xi$ as in \autoref{ass:length} with respect to a distance satisfying \autoref{ass:ball}.
     Then, for any $k$-local Hamiltonian $H$ we have
     \begin{equation}\label{eq:thm:concentration2}
      \mathbb P(H\geq \left\langle H\right\rangle + na)\leq \left(C+1\right)\exp \left[-n^\frac{1}{2d+1} \,\frac{F\left(\frac{ak}{\|H\|_{\textnormal{loc}}}\right)}{k^2A^2\left[\frac{\xi}{kA^2}\left(k\,s^*\left(\frac{ak}{\|H\|_{\textnormal{loc}}}\right)+ F\left(\frac{ak}{\|H\|_{\textnormal{loc}}}\right)\right)  \right]^{\frac{2d}{2d+1}} }\right]\,,
 \end{equation}
 where $F(x):=\max_ss(x-e^s+1+\frac{s}{2})$ and $s^*(x) := \mathrm{argmax}_s s(x-e^s+1+\frac{s}{2})$.
\end{cor}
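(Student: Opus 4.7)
The plan is to combine Markov's inequality with the exponential moment bound of \autoref{thm:bound2}, optimise in the Markov parameter, and finally tune the enlargement scale $l$ so as to balance the two additive terms of that bound.

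Without loss of generality assume $\langle H\rangle = 0$, since replacing $H$ by $H-\langle H\rangle$ leaves $\|H\|_{\textnormal{loc}}$ unchanged. By Markov's inequality \eqref{eq:Markov}, for every $t>0$ one has $\mathbb P(H\geq na)\leq e^{-tna}\langle e^{tH}\rangle$. Plugging in \eqref{eq:e^Hcorr} applied to $tH$ (noting $\|tH\|_{\textnormal{loc}}=t\|H\|_{\textnormal{loc}}$) yields an upper bound on $\mathbb P(H\geq na)$ consisting of two exponential contributions: the ``local'' one coming from the first summand of \eqref{eq:e^Hcorr}, and a ``correlation remainder'' carrying the factor $C\,e^{-l/\xi}$.

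Next, perform the change of variables $s := k A^2 l^{2d}\, t\, \|H\|_{\textnormal{loc}}$ and $x := ak/\|H\|_{\textnormal{loc}}$, mirroring the substitution employed in the proof of \autoref{thm:concentrationproduct}. The exponent of the local term rewrites as $-\frac{ns}{k^2 A^2 l^{2d}}\left(x - e^s + 1 + \frac{s}{2}\right)$; its supremum in $s$ is $-\frac{n\,F(x)}{k^2 A^2 l^{2d}}$, attained at $s=s^{*}(x)$. Inserting this same $s=s^{*}(x)$ into the second exponent, it becomes $\frac{n\,s^{*}(1-x/k)}{k A^2 l^{2d}} - \frac{l}{\xi}$.

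The last step is to pick $l$ so that the correlation remainder also fits under the same exponential as the local term. The choice
\begin{equation*}
l^{2d+1} \,=\, \frac{n\,\xi\bigl(k\, s^{*}(x)+F(x)\bigr)}{k A^2}
\end{equation*}
gives $l/\xi = \frac{n\,(k s^{*}+F(x))}{k A^2 l^{2d}}$, and a one-line check using $k\geq 1$ together with $s^{*}, F(x), x\geq 0$ (which boils down to $(k-1)(ks^{*}+F(x)) + s^{*}x \geq 0$) verifies that the second exponent is then at most $-\frac{n F(x)}{k^2 A^2 l^{2d}}$. Both summands are thus dominated by a common exponential; substituting the chosen $l$ turns $\frac{n F(x)}{k^2 A^2 l^{2d}}$ into $\frac{n^{1/(2d+1)} F(x)}{k^2 A^2 \left[\frac{\xi}{k A^2}(k s^{*}+F(x))\right]^{2d/(2d+1)}}$, and summing the two copies produces the prefactor $C+1$ of \eqref{eq:thm:concentration2}.

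No deep obstruction is present: the scaling $n^{1/(2d+1)}$ arises mechanically from balancing the volume growth $l^{2d}$ imposed by \autoref{ass:ball} against the linear exponential decay $e^{-l/\xi}$ granted by \autoref{ass:length}. The only delicate point is to pick $l$ slightly larger than what would strictly suffice for the local term alone, so that the $C e^{-l/\xi}$ tail also sits beneath the same exponential --- otherwise the correlation error would dominate and the claimed exponent would be lost.
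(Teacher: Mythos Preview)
Your proposal is correct and follows essentially the same route as the paper: Markov's inequality combined with \autoref{thm:bound2}, the substitution $s=kA^2l^{2d}t\|H\|_{\textnormal{loc}}$, optimisation of the first (local) term in $s$, and then the choice $l^{2d+1}=\frac{n\xi(ks^*+F)}{kA^2}$ to force the correlation remainder under the same exponential. The only cosmetic differences are that you retain the $e^{-tna}$ prefactor on the second summand (the paper drops it, which is harmless since it is $\le 1$), and you explicitly check the inequality $(k-1)(ks^*+F)+s^*x\ge 0$ for domination, whereas the paper phrases the same step as ``balancing'' the two exponents.
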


\begin{proof}
Without loss of generality, we can assume $\left\langle H\right\rangle=0$.
    Inserting the bound \eqref{eq:e^Hcorr} into the Markov inequality \eqref{eq:Markov}, we have
    \begin{align}
    \mathbb P(H\geq na) &\leq \exp\left[-tna +\frac{nt}{k}\left(e^{tkA^2l^{2d}\|H\|_{\textnormal{loc}}}-1-\frac{tkA^2l^{2d}\|H\|_{\textnormal{loc}}}{2} \right) \|H\|_{\textnormal{loc}} \right]\nonumber\\
    &\phantom{\le} + C\exp\left[nt\|H\|_{\textnormal{loc}}-\frac{l}{\xi}  \right].
    \end{align}
Following the approach adopted in the proof of \autoref{thm:concentrationproduct}, once introduced the variable $s=tkA^2l^{2d}\|H\|_{\textnormal{loc}}$, the concentration inequality becomes
 \begin{equation}\label{eq:conctr(s,l)}
         \mathbb P(H\geq na) \leq \exp\left[-\frac{n}{k^2A^2 l^{2d}} s\left(\frac{ak}{\|H\|_{\textnormal{loc}}}-e^s+1+\frac{s}{2}\right)  \right]+ C\exp\left[\frac{ns}{k A^2\, l^{2d}}-\frac{l}{\xi}  \right].
     \end{equation}
In addition to the minimization over $s$ the right-hand side of the above inequality can be optimized also w.r.t. $l$. Instead of performing such a double optimization, let us improve the bound on $\mathbb P(H\geq na)$ minimizing over $s$ only the first exponential term in \eqref{eq:conctr(s,l)} with the same strategy of the proof of \autoref{thm:concentrationproduct}.
We have
\begin{equation}\label{eq:conctr(s*,l)}
   \mathbb P(H\geq na) \leq \exp\left[-\frac{n}{k^2A^2 l^{2d}} F\left(\frac{ak}{\|H\|_{\textnormal{loc}}}\right)  \right]+ C\exp\left[\frac{n\,s^*\left(\frac{ak}{\|H\|_{\textnormal{loc}}}\right)}{k A^2\, l^{2d}}-\frac{l}{\xi}  \right]. 
\end{equation}
Now, let us fix the value of $l$ requiring the balance of the two exponential terms:
\begin{equation}
-n F\left(\frac{ak}{\|H\|_{\textnormal{loc}}}\right)=kn\,s^*\left(\frac{ak}{\|H\|_{\textnormal{loc}}}\right)-\frac{l^{2d+1}k A^2}{\xi},
\end{equation}
then:
\begin{equation}\label{eq:l}
    l=\left[\frac{\xi}{kA^2}\left(kn\,s^*\left(\frac{ak}{\|H\|_{\textnormal{loc}}}\right)+n F\left(\frac{ak}{\|H\|_{\textnormal{loc}}}\right)\right)  \right]^{\frac{1}{2d+1}}.
\end{equation}
By the substitution into \eqref{eq:conctr(s*,l)} we obtain:
\begin{equation}
    \mathbb P(H\geq na)\leq (C+1)\exp\left[-\frac{n^{\frac{1}{2d+1}}F\left(\frac{ak}{\|H\|_{\textnormal{loc}}}\right)}{k^2A^2\left[\frac{\xi}{kA^2}\left(k\,s^*\left(\frac{ak}{\|H\|_{\textnormal{loc}}}\right)+ F\left(\frac{ak}{\|H\|_{\textnormal{loc}}}\right)\right)  \right]^{\frac{2d}{2d+1}} } \right],
\end{equation}
    which is the claim.
\end{proof}

\begin{rem}
In the proof of \autoref{thm:concentration2}, we have improved the bound of inequality \eqref{eq:conctr(s,l)} optimizing over $s$ and fixing the value of $l$ by the condition that the two exponential terms are balanced. However, even if these two steps are swapped, the obtained bound is the same as one can check by direct inspection.
\end{rem}

\vspace{0.25cm}

\modifica{\begin{rem}
We can write a concentration inequality that is expressed in terms of elementary functions and is more explicit than \eqref{eq:thm:concentration2}, at the cost of worsening the bound. In the proof of \autoref{thm:concentration2}, the inequality \eqref{eq:conctr(s*,l)} is obtained by optimizing over $s$.
However, we can choose the approximate optimum $\hat s^*(x)=\log(x+1)$ as done in \autoref{rem_1}. Leaving the selection of the value of $l$ given by \eqref{eq:l} unchanged, the resulting bound is: 
\begin{equation}
    \mathbb P(H\geq na)\leq (C+1)\exp\left[-\frac{n^{\frac{1}{2d+1}}\log^2\left(\frac{ak}{\|H\|_{\textnormal{loc}}}+1\right)}{2k^2A^2\left[\frac{\xi}{kA^2}\left(k\,\log\left(\frac{ak}{\|H\|_{\textnormal{loc}}}+1\right)+ \frac{1}{2}\log^2\left(\frac{ak}{\|H\|_{\textnormal{loc}}}+1\right)\right)  \right]^{\frac{2d}{2d+1}} } \right].
\end{equation}

\end{rem}}

\modifica{
\subsubsection{Comparison with previous concentration bounds}

Our concentration bound for states with exponential decay of correlations, \autoref{thm:concentration2}, can be compared with existing results. In particular, under the same assumption on the correlation decay, in \cite{kuwahara2020eigenstate}, the authors provide the following inequality: 
\begin{equation}
    \mathbb P(H\geq \langle H\rangle + na )\geq \min\left(1, (e+3e\xi)\max\left(e^{-(\frac{na^2}{c})^{\frac{1}{d+1}}}, e^{-\frac{na^2}{c'k^d }}\right)\right),
\end{equation}
where $c$ and $c'$ are constant depending on $d$ and $\xi$, and $k$ is the locality of $H$. A similar order of decay is provided by \cite[Theorem 4.2]{anshu2016concentration}. However, these works have the requirement that the diameter of the regions, where the local terms of the observables act, is $O(1)$ and the spin lattice presents a regular structure. 

In \cite{de2022quantum}, there are some results about Gaussian concentration of the values of local observables. In particular \cite[Theorem 7]{de2022quantum} provides a concentration bound for states satisfying a transportation cost inequality, defined by the quantum $W_1$ distance as defined in \cite{de2021quantum}. In the particular case where the state commutes with the considered observable $H$, the concentration assumes the form: 
\begin{equation}
    \mathbb P(H\geq \langle H\rangle +na)\leq 2\exp\left(-\frac{(na)^2}{c\|H\|^2_L}\right)\,,
\end{equation}
where the constant $c$ is linked to a transportation cost inequality satisfied by the Gibbs state and is proved to be $O(n)$ at high enough temperature.
However, \cite[Theorem 7]{de2022quantum} requires that $H$ is a sum of commuting local terms.

In \autoref{table}, we compare both the concentration bounds provided by \autoref{thm:concentrationproduct} and \autoref{thm:concentration2} with some known results. In particular we highlight the assumptions under which the bounds are derived in order to stress the novelty of our concentration inequalities.

\begin{table}[h]
\centering
\begin{tabular}{ |c| c | c | c | }
\hline
\textbf{Concentration} & \textbf{Quantum states}& \textbf{Assumptions} &\textbf{Ref.} \\ 
 \hline
 $e^{-\Omega(n)}$ & Product states & $\|H\|_{\mathrm{loc}}$ is $O(1)$ & Our result \\
 & & Any $\Lambda$ contains $O(1)$ spins  &(\autoref{thm:concentrationproduct})\\
 \hline
 $e^{-\Omega(n)}$ & Maximally mixed state & $\|H\|_L$ is $O(1)$ & \cite[Corollary 3]{depalma2023quantum} \\
 \hline
 $e^{-\Omega(n)}$ & Product states & Regions overlap is $O(1)$  & \cite[Theorem 5.2]{anshu2016concentration} \\
 \hline
 %Nearest Neighbors& Grover & &Quadratic\\
 %\hline
 $\exp\left(-\Omega\left(n^{\frac{1}{2d+1}}\right)\right)$ & States with finite  & $\|H\|_{\mathrm{loc}}$ is $O(1)$ & Our result \\ 
& correlation length & Any $\Lambda$ contains $O(1)$ spins  & (\autoref{thm:concentration2})\\

 \hline
 $\exp\left(-\Omega\left(n^{\frac{1}{2d+1}}\right)\right)$  & States with finite & Diameter of $\Lambda$ is $O(1)$ & \cite[Equation S.17]{kuwahara2020eigenstate} \\
 & correlation length & Regular spin lattice & \cite[Theorem 4.2]{anshu2016concentration}\\
 \hline
 $e^{-\Omega(n)}$ & High-temperature states & Commuting local terms in $H$ & \cite[Theorem 7]{de2022quantum}\\
 \hline
\end{tabular}

\sp

\caption{In the table, we report the concentration bounds presented in this paper and some previous known results. The first column reports the Bound over $\mathbb P(H\geq \langle H\rangle+na)$, the second one specifies the considered quantum states, and the third column contains the further assumptions under which the bounds were derived. The regions $\Lambda$ are those involved in the definition of the local observable $H$.}
\label{table}
\end{table}
}

\section{A transportation-cost inequality}\label{sec:TCI}

\subsection{The \texorpdfstring{$k$}{k}-local quantum \texorpdfstring{$W_1$}{W1} distance}\label{sec:distance}
We define the $k$-local quantum $W_1$ distance as the distance that quantifies the distinguishability of quantum states with respect to the $k$-local observables with unit local norm:

\sp

\begin{defn}[$k$-local quantum $W_1$ norm]\label{defn:kW1}
    We define the $k$-local quantum $W_1$ norm as the norm on $\mathcal{O}_{[n]}^T$ that is dual to the local norm on $\mathcal{O}_{[n]}^{(k)}$: For any $\Delta\in\mathcal{O}_{[n]}^T$,
    \begin{equation}\label{eq:defW1loc}
        \|\Delta\|_{k\textnormal{-}W_1\textnormal{loc}} = \max\left\{\mathrm{Tr}\left[\Delta\,H\right] : H\in\mathcal{O}_{[n]}^{(k)}\,,\|H\|_{\textnormal{loc}}\le1\right\}\,.
    \end{equation}
The $k$-local quantum $W_1$ distance is the distance on $\mathcal{S}_{[n]}$ induced by the $k$-local quantum $W_1$ norm: For any 
and for any $\rho,\,\sigma\in\mathcal{S}_{[n]}$
    \begin{equation}\label{eq:dual def}
        \|\rho-\sigma\|_{k\textnormal{-}W_1\textnormal{loc}} = \max\left\{\mathrm{Tr}\left[\left(\rho-\sigma\right)H\right] : H\in\mathcal{O}_{[n]}^{(k)}\,,\|H\|_{\textnormal{loc}}\le1\right\}\,.
    \end{equation}
\end{defn}

\sp

\begin{rem}
Ref. \cite{depalma2023classical} defines a local quantum $W_1$ norm as the dual of the local norm \eqref{eq:locc}:
    \begin{equation}\label{eq:defW1locc}
        \|\Delta\|_{W_1\textnormal{loc}} = \max\left\{\mathrm{Tr}\left[\Delta\,H\right] : H\in\mathcal{O}_{[n]}^{(k)}\,,\|H\|_{\widetilde{\textnormal{loc}}}\le1\right\}\,.
    \end{equation}
The $k$-local quantum $W_1$ norm can be recovered from \eqref{eq:defW1locc} by setting to one the penalties associated with the regions containing at most $k$ spins and to infinity the penalties associated with the regions containing more than $k$ spins.
Indeed, with such choice the observables $H\in\mathcal{O}_{[n]}$ with $\|H\|_{\widetilde{\textnormal{loc}}}\le 1$ are exactly the $k$-local observables with $\|H\|_{\textnormal{loc}}\le1$, and the optimizations in \eqref{eq:defW1locc} and \eqref{eq:defW1loc} coincide.
\end{rem}

The $k$-local quantum $W_1$ norm can be computed with a linear program.
\eqref{eq:defW1loc} constitutes the dual program, while the primal program is provided by the following:

\sp

\begin{prop}\label{prop:primal}
    For any $\rho,\,\sigma\in\mathcal{S}_{[n]}$ we have
    \begin{equation}\label{eq:Wloc}
        \|\rho-\sigma\|_{k\textnormal{-}W_1\textnormal{loc}} = \min\left\{\sum_{x\in[n]}a_x : \left\|\rho_\Lambda-\sigma_\Lambda\right\|_1 \le 2\sum_{x\in\Lambda}a_x\;\forall\,\Lambda\subseteq[n]:|\Lambda|\le k\right\}\,.
    \end{equation}
\end{prop}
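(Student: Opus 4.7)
The plan is to recognize the right-hand side of \eqref{eq:Wloc} as a linear program and apply strong LP duality. The primal LP has objective $\sum_x a_x$ with constraints $2\sum_{x\in\Lambda}a_x\ge\|\rho_\Lambda-\sigma_\Lambda\|_1$ for each $\Lambda\subseteq[n]$ with $|\Lambda|\le k$; nonnegativity of $a_x$ is automatic from the singleton constraint with $\Lambda=\{x\}$. Both this LP and its dual are feasible (the primal via a large uniform $a_x$, the dual via the zero assignment) and the common value is finite, so strong duality applies and the primal minimum coincides with the dual maximum.

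Next I would compute the dual explicitly. Introducing nonnegative multipliers $b_\Lambda$ for the constraints, the dual reads
\begin{equation*}
\max\left\{\sum_{|\Lambda|\le k}\frac{b_\Lambda}{2}\left\|\rho_\Lambda-\sigma_\Lambda\right\|_1 : b_\Lambda\ge 0,\,\sum_{\Lambda\ni x}b_\Lambda\le 1\;\forall x\in[n]\right\}\,.
\end{equation*}
I would then invoke the H\"older duality between operator and trace norms, $\|\rho_\Lambda-\sigma_\Lambda\|_1 = \max\{\mathrm{Tr}[(\rho_\Lambda-\sigma_\Lambda)X]:X\in\mathcal{O}_\Lambda,\,\|X\|\le1\}$, and absorb the scalar $b_\Lambda/2$ into a self-adjoint operator $h_\Lambda\in\mathcal{O}_\Lambda$ saturating $\|h_\Lambda\|=b_\Lambda/2$ at the optimum. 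The dual becomes
\begin{equation*}
\max\left\{\sum_{|\Lambda|\le k}\mathrm{Tr}\left[(\rho_\Lambda-\sigma_\Lambda)h_\Lambda\right] : h_\Lambda\in\mathcal{O}_\Lambda,\,\sum_{\Lambda\ni x}\|h_\Lambda\|\le\tfrac12\;\forall x\in[n]\right\}\,.
\end{equation*}
Setting $H:=\sum_{|\Lambda|\le k}h_\Lambda\in\mathcal{O}_{[n]}^{(k)}$, the objective equals $\mathrm{Tr}[(\rho-\sigma)H]$ since each $h_\Lambda$ is supported on $\mathcal{H}_\Lambda$, and $\|H\|_{\textnormal{loc}}\le 1$ follows immediately from the explicit $k$-local decomposition. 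This shows that the dual is bounded above by $\|\rho-\sigma\|_{k\textnormal{-}W_1\textnormal{loc}}$.

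The main obstacle is the matching converse inclusion: I need every $H\in\mathcal{O}_{[n]}^{(k)}$ with $\|H\|_{\textnormal{loc}}\le 1$ to admit a \emph{$k$-local} decomposition $H=\sum_{|\Lambda|\le k}h_\Lambda$ with $\sum_{\Lambda\ni x}\|h_\Lambda\|\le 1/2$, rather than an arbitrary one permitted by the unrestricted infimum in \eqref{eq:loc}. This is precisely the content of the remark following \autoref{defn:kW1}, where the $k$-local $W_1$ norm is identified with the dual of the penalty-based local norm \eqref{eq:defW1locc} under the choice $c_{|\Lambda|}=1$ for $|\Lambda|\le k$ and $c_{|\Lambda|}=+\infty$ for $|\Lambda|>k$; with this reformulation the defining optimization is manifestly restricted to $k$-local decompositions, so the dual LP coincides with $\|\rho-\sigma\|_{k\textnormal{-}W_1\textnormal{loc}}$ and the proposition follows.
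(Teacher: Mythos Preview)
Your proof is correct and takes essentially the same LP-duality approach that the paper delegates to \cite[Proposition 2.4]{depalma2023classical}. You correctly flag the only subtle point---that a $k$-local $H$ with $\|H\|_{\textnormal{loc}}\le 1$ admits a $k$-local witnessing decomposition---and resolve it by invoking the remark after \autoref{defn:kW1}, exactly as the paper does.
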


\begin{proof}
Analogous to \cite[Proposition 2.4]{depalma2023classical}.
\end{proof}

\begin{rem}
From \autoref{prop:primal}, the $k$-local quantum $W_1$ distance between $\rho$ and $\sigma$ depends only on the marginals $\rho_\Lambda$ and $\sigma_\Lambda$ on regions $\Lambda$ containing at most $k$ spins.
Therefore, if $\rho_\Lambda=\sigma_\Lambda$ for all such $\Lambda$, such distance vanishes even if $\rho$ and $\sigma$ do not coincide.
\end{rem}

\subsection{Properties of the \texorpdfstring{$k$}{k}-local quantum \texorpdfstring{$W_1$}{W1} distance}
The $k$-local quantum $W_1$ distance inherits all the properties of the local quantum $W_1$ distance of \cite{depalma2023classical}:
\begin{itemize}
\item For $n=1$, it coincides with the trace distance:
\begin{prop}\label{prop:trace}
    For $n=1$, $\|\cdot\|_{k\textnormal{-}W_1\textnormal{loc}} = \frac{1}{2}\|\cdot\|_1$.
\end{prop}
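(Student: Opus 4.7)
The plan is to reduce the statement to the well-known duality between the operator norm and the trace norm on a single-site Hilbert space. First, I would compute the local norm on $\mathcal{O}_{[1]}$ explicitly. For $n=1$ the only nonempty subset of $[n]$ is $\{1\}$, so every decomposition in \eqref{eq:loc} has the form $H = h_\emptyset\,I + h_{\{1\}}$ with $h_\emptyset \in \mathbb{R}$ and $h_{\{1\}}$ a self-adjoint operator on $\mathbb{C}^q$. The only site that appears is $x=1$, and the sum in \eqref{eq:loc} reduces to $\|h_{\{1\}}\|$, so
\begin{equation}
\|H\|_{\textnormal{loc}} \;=\; 2\min_{c\in\mathbb{R}}\|H-cI\| \;=\; \lambda_{\max}(H)-\lambda_{\min}(H)\,,
\end{equation}
the last equality being the standard fact that the best constant approximation in operator norm is the midpoint of the spectrum.

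Next, I would exploit tracelessness. For any $\Delta\in\mathcal{O}_{[1]}^T$ and any $c\in\mathbb{R}$, $\tr[\Delta(H-cI)]=\tr[\Delta H]$. Consequently, in the dual definition \eqref{eq:defW1loc} one may, without loss of generality, replace $H$ by the shifted observable $H':=H-c^\star I$ that attains the minimum in the computation of $\|H\|_{\textnormal{loc}}$. This shifted observable satisfies $\|H'\|=\tfrac{1}{2}\|H\|_{\textnormal{loc}}$, and conversely any self-adjoint $H'$ with $\|H'\|\le\tfrac{1}{2}$ has $\|H'\|_{\textnormal{loc}}\le 2\|H'\|\le 1$ (bounding the decomposition $H'=0+H'$). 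Therefore the constraint $\|H\|_{\textnormal{loc}}\le 1$ in \eqref{eq:defW1loc} can be replaced by $\|H'\|\le\tfrac{1}{2}$ over self-adjoint $H'$.

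Finally, I would invoke the standard duality $\|\Delta\|_1=\max\{\tr[\Delta B]:B=B^*,\,\|B\|\le 1\}$ (with $\Delta$ self-adjoint), which gives
\begin{equation}
\|\Delta\|_{k\textnormal{-}W_1\textnormal{loc}}=\max_{\|H'\|\le 1/2}\tr[\Delta H']=\tfrac{1}{2}\|\Delta\|_1\,.
\end{equation}
Applying this to $\Delta=\rho-\sigma$ yields the proposition. I do not expect any real obstacle here: the only substantive point is the identification of the local norm with (twice) the minimal operator-norm shift, which uses only that $\mathcal{O}_\emptyset\cong\mathbb{R}$ and the spectral description of $\min_c\|H-cI\|$.
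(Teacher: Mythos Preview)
Your argument is correct. You work directly with the dual definition \eqref{eq:defW1loc}: you identify the single-site local norm as $\|H\|_{\textnormal{loc}}=2\min_{c\in\mathbb{R}}\|H-cI\|$, absorb the shift by $cI$ using tracelessness of $\Delta$, and then invoke operator/trace-norm duality. Each step is sound.

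The paper, by contrast, derives the statement from the primal linear program of \autoref{prop:primal}: for $n=1$ the only admissible region is $\Lambda=\{1\}$, the feasibility constraint reads $\|\rho-\sigma\|_1\le 2a_1$, and minimizing $a_1$ gives $\tfrac12\|\rho-\sigma\|_1$ immediately. So the paper's proof is a one-liner once the primal formulation is in hand, whereas yours is self-contained but requires the explicit computation of the local norm and the spectral-midpoint fact. Both are short; yours has the mild advantage of not depending on the (unproved in the paper) primal characterization, while the paper's route makes the role of \autoref{prop:primal} transparent and generalizes more readily to the other structural propositions that follow it.
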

\begin{proof}
Follows from \autoref{prop:primal}.
\end{proof}
\item It is lower than the quantum $W_1$ distance of \cite{de2021quantum}:
\begin{prop}\label{prop:Lloc}
    We have $\|\cdot\|_{k\textnormal{-}W_1\textnormal{loc}} \le \|\cdot\|_{W_1}$.
\end{prop}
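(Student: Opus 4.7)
The plan is to work with the dual formulations of both norms. By \autoref{defn:kW1}, the $k$-local $W_1$ norm is the support function of the set
\[
\mathcal{U}_k := \left\{H \in \mathcal{O}_{[n]}^{(k)} : \|H\|_{\textnormal{loc}} \le 1\right\}\,,
\]
while the quantum $W_1$ norm of \cite{de2021quantum} is known to be the support function of the set
\[
\mathcal{U}_L := \left\{H \in \mathcal{O}_{[n]} : \|H\|_L \le 1\right\}\,,
\]
where $\|H\|_L = 2\max_{x\in[n]}\min_{H^{(x)}\in\mathcal{O}_{[n]\setminus\{x\}}}\left\|H - H^{(x)}\otimes I_x\right\|$ is the quantum Lipschitz constant. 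Thus it suffices to prove the set inclusion $\mathcal{U}_k \subseteq \mathcal{U}_L$, and then the desired inequality follows by taking the max of $\mathrm{Tr}[\Delta\,H]$ over progressively larger constraint sets.

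The key step is the pointwise comparison $\|H\|_L \le \|H\|_{\textnormal{loc}}$ for every self-adjoint $H$. Given any decomposition $H = \sum_{\Lambda\subseteq[n]} h_\Lambda$ with $h_\Lambda\in\mathcal{O}_\Lambda$, fix a site $x\in[n]$ and take as candidate operator
\[
H^{(x)} := \sum_{\Lambda\subseteq[n]:\,x\notin\Lambda} h_\Lambda\in\mathcal{O}_{[n]\setminus\{x\}}\,.
\]
Then $H - H^{(x)}\otimes I_x = \sum_{\Lambda\ni x} h_\Lambda$, so by the triangle inequality for the operator norm,
\[
\min_{H^{(x)}\in\mathcal{O}_{[n]\setminus\{x\}}}\left\|H - H^{(x)}\otimes I_x\right\| \le \sum_{\Lambda\ni x}\|h_\Lambda\|\,.
\]
Taking the maximum over $x$, multiplying by $2$, and then taking the infimum over all decompositions of $H$ yields $\|H\|_L \le \|H\|_{\textnormal{loc}}$.

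Combining the two observations, any $H\in\mathcal{U}_k$ satisfies $\|H\|_L \le \|H\|_{\textnormal{loc}}\le 1$, so $\mathcal{U}_k\subseteq\mathcal{U}_L$. Therefore, for every $\Delta\in\mathcal{O}_{[n]}^T$,
\[
\|\Delta\|_{k\textnormal{-}W_1\textnormal{loc}} = \max_{H\in\mathcal{U}_k}\mathrm{Tr}[\Delta\,H] \le \max_{H\in\mathcal{U}_L}\mathrm{Tr}[\Delta\,H] = \|\Delta\|_{W_1}\,,
\]
which proves the claim. I do not anticipate an obstacle here: the statement is essentially a comparison of dual norms, and the only nontrivial ingredient is the elementary bound $\|H\|_L \le \|H\|_{\textnormal{loc}}$, whose proof is a one-line application of the triangle inequality once the right choice of $H^{(x)}$ is made.
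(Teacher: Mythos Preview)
Your proof is correct. The paper itself does not spell out the argument but simply refers to the analogous \cite[Proposition 2.5]{depalma2023classical}; your approach via the dual formulations and the pointwise comparison $\|H\|_L \le \|H\|_{\textnormal{loc}}$ is exactly the natural way to carry out that analogy, and no step is missing or suspect.
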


\begin{proof}
Analogous to \cite[Proposition 2.5]{depalma2023classical}.
\end{proof}

\item It can be upper bounded by the trace distances between the marginal of the states:
\begin{prop}\label{prop:W1locUB}
    For any $\rho,\,\sigma\in\mathcal{S}_{[n]}$ we have
    \begin{equation}\label{eq:DeltaR}
        \|\rho - \sigma\|_{k\textnormal{-}W_1\textnormal{loc}} \le \sum_{x\in[n]}\max_{\Lambda\ni x:\left|\Lambda\right|\le k}\frac{\left\|\rho_\Lambda - \sigma_\Lambda\right\|_1}{2\left|\Lambda\right|} \le n\max_{\Lambda\subseteq[n]:\left|\Lambda\right|\le k}\frac{\left\|\rho_\Lambda - \sigma_\Lambda\right\|_1}{2\left|\Lambda\right|}\,.
    \end{equation}
\end{prop}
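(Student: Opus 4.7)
The plan is to use the primal linear program from \autoref{prop:primal}: to produce an upper bound on $\|\rho-\sigma\|_{k\textnormal{-}W_1\textnormal{loc}}$ it suffices to exhibit a feasible choice of nonnegative weights $\{a_x\}_{x\in[n]}$ satisfying $\|\rho_\Lambda-\sigma_\Lambda\|_1\le 2\sum_{x\in\Lambda}a_x$ for every $\Lambda\subseteq[n]$ with $|\Lambda|\le k$, and then read off $\|\rho-\sigma\|_{k\textnormal{-}W_1\textnormal{loc}}\le\sum_x a_x$.

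The natural guess, dictated by the form of the target inequality, is
\begin{equation*}
a_x := \max_{\Lambda\ni x:\,|\Lambda|\le k}\frac{\|\rho_\Lambda-\sigma_\Lambda\|_1}{2|\Lambda|}\,.
\end{equation*}
To verify feasibility I would fix an arbitrary $\Lambda\subseteq[n]$ with $|\Lambda|\le k$ and observe that for every $x\in\Lambda$ the region $\Lambda$ is itself an admissible choice in the maximization defining $a_x$, so $a_x\ge \|\rho_\Lambda-\sigma_\Lambda\|_1/(2|\Lambda|)$. Summing over $x\in\Lambda$ yields $\sum_{x\in\Lambda}a_x\ge \tfrac{1}{2}\|\rho_\Lambda-\sigma_\Lambda\|_1$, which is exactly the feasibility constraint. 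Plugging this $\{a_x\}$ into the primal representation of \autoref{prop:primal} gives the first inequality in \eqref{eq:DeltaR}.

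The second inequality is then immediate: each term of the sum over $x\in[n]$ is bounded by the global maximum $\max_{\Lambda\subseteq[n]:|\Lambda|\le k}\|\rho_\Lambda-\sigma_\Lambda\|_1/(2|\Lambda|)$, and there are $n$ such terms. There is no real obstacle here; the only thing that must be handled with some care is the observation that the region $\Lambda$ itself (not just singletons or larger sets containing $x$) is an admissible witness in the definition of $a_x$, which is what makes the counting $|\Lambda|\cdot \frac{1}{2|\Lambda|}=\tfrac{1}{2}$ work exactly.
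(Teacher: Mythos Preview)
Your proof is correct and is essentially the argument the paper has in mind: the paper simply cites the analogous Proposition 2.6 of \cite{depalma2023classical}, and the natural proof there proceeds exactly by exhibiting the feasible choice $a_x=\max_{\Lambda\ni x:|\Lambda|\le k}\|\rho_\Lambda-\sigma_\Lambda\|_1/(2|\Lambda|)$ in the primal program of \autoref{prop:primal}.
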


\begin{proof}
Analogous to \cite[Proposition 2.6]{depalma2023classical}.
\end{proof}

\item It is superadditive with respect to the composition of quantum systems and additive with respect to the tensor product:
\begin{prop}\label{prop:tens}
    For any $\rho,\,\sigma\in\mathcal{S}_{[n]}$ and any $\Lambda\subset[n]$ we have
    \begin{equation}
        \|\rho-\sigma\|_{k\textnormal{-}W_1\textnormal{loc}} \ge \|\rho_\Lambda-\sigma_\Lambda\|_{k\textnormal{-}W_1\textnormal{loc}} + \|\rho_{\Lambda^c}-\sigma_{\Lambda^c}\|_{k\textnormal{-}W_1\textnormal{loc}}\,.
    \end{equation}
    Moreover, equality is achieved when $\rho = \rho_\Lambda\otimes\rho_{\Lambda^c}$ and $\sigma = \sigma_\Lambda\otimes\sigma_{\Lambda^c}$.
\end{prop}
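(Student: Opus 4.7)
The plan is to handle the two inequalities separately: superadditivity follows from the dual definition \eqref{eq:dual def}, while the matching upper bound for tensor products follows from the primal formulation of \autoref{prop:primal}.

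For superadditivity, I would take any pair of optimizers $H_\Lambda\in\mathcal{O}_\Lambda^{(k)}$ with $\|H_\Lambda\|_{\textnormal{loc}}\le 1$ and $H_{\Lambda^c}\in\mathcal{O}_{\Lambda^c}^{(k)}$ with $\|H_{\Lambda^c}\|_{\textnormal{loc}}\le 1$ for the two right-hand side norms, and build the candidate observable
\begin{equation}
H = H_\Lambda\otimes I_{\Lambda^c} + I_\Lambda\otimes H_{\Lambda^c}\in\mathcal{O}_{[n]}^{(k)}\,.
\end{equation}
The key observation is that if each summand is decomposed via its local-norm-optimal decomposition, then every region appearing in the combined decomposition is contained in either $\Lambda$ or $\Lambda^c$. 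Consequently, for any site $x\in[n]$ the terms acting on $x$ all come from exactly one of the two decompositions, so $\|H\|_{\textnormal{loc}}\le\max(\|H_\Lambda\|_{\textnormal{loc}},\|H_{\Lambda^c}\|_{\textnormal{loc}})\le 1$. Since tracing against $\rho-\sigma$ factors as
\begin{equation}
\mathrm{Tr}[(\rho-\sigma)H] = \mathrm{Tr}[(\rho_\Lambda-\sigma_\Lambda)H_\Lambda] + \mathrm{Tr}[(\rho_{\Lambda^c}-\sigma_{\Lambda^c})H_{\Lambda^c}]\,,
\end{equation}
taking the supremum over the two pieces independently yields the desired superadditivity.

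For the equality in the tensor-product case, I would invoke \autoref{prop:primal} and take optimal primal assignments $\{a_x\}_{x\in\Lambda}$ and $\{b_x\}_{x\in\Lambda^c}$ attaining $\|\rho_\Lambda-\sigma_\Lambda\|_{k\textnormal{-}W_1\textnormal{loc}}$ and $\|\rho_{\Lambda^c}-\sigma_{\Lambda^c}\|_{k\textnormal{-}W_1\textnormal{loc}}$ respectively. I would glue them into $c_x=a_x$ on $\Lambda$ and $c_x=b_x$ on $\Lambda^c$ and check feasibility for every $S\subseteq[n]$ with $|S|\le k$. When $S\subseteq\Lambda$ or $S\subseteq\Lambda^c$ feasibility is inherited directly; the interesting case is $S=S_\Lambda\sqcup S_{\Lambda^c}$ with both parts nonempty, where the tensor-product hypothesis gives $\rho_S=\rho_{S_\Lambda}\otimes\rho_{S_{\Lambda^c}}$ and similarly for $\sigma_S$. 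The triangle inequality
\begin{equation}
\|\rho_{S_\Lambda}\otimes\rho_{S_{\Lambda^c}}-\sigma_{S_\Lambda}\otimes\sigma_{S_{\Lambda^c}}\|_1\le\|\rho_{S_\Lambda}-\sigma_{S_\Lambda}\|_1+\|\rho_{S_{\Lambda^c}}-\sigma_{S_{\Lambda^c}}\|_1\,,
\end{equation}
obtained by inserting $\pm\,\rho_{S_\Lambda}\otimes\sigma_{S_{\Lambda^c}}$ and using $\|\rho_{S_\Lambda}\|_1=\|\sigma_{S_{\Lambda^c}}\|_1=1$, then bounds the right-hand side by $2\sum_{x\in S_\Lambda}a_x+2\sum_{x\in S_{\Lambda^c}}b_x = 2\sum_{x\in S}c_x$. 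Thus $\{c_x\}$ is primal-feasible with objective $\sum_{x\in\Lambda}a_x+\sum_{x\in\Lambda^c}b_x$, which gives the matching upper bound.

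The only subtle point I foresee is verifying the local-norm bound for the glued observable in the superadditivity step, since one must be careful that the factor of $2$ and the quantification over spins in \autoref{defn:local} behave correctly under restriction of a decomposition to a subsystem; but because the supports of the two families of local terms are disjoint at the level of sites, no cross-contribution appears, and the argument goes through cleanly.
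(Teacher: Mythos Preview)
Your proof is correct. The paper itself does not spell out the argument---it simply notes that the proof is analogous to \cite[Proposition~2.7]{depalma2023classical}---and the two-part strategy you give (dual formulation \eqref{eq:dual def} for superadditivity via the glued observable $H=H_\Lambda\otimes I_{\Lambda^c}+I_\Lambda\otimes H_{\Lambda^c}$, primal formulation of \autoref{prop:primal} for the tensor-product upper bound via the glued assignment $\{c_x\}$) is exactly the standard route behind that reference.
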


\begin{proof}
Analogous to \cite[Proposition 2.7]{depalma2023classical}.
\end{proof}
\item It is lower bounded by the sum of the trace distances between the single-spin marginals:
\begin{prop}\label{prop:single}
For any $\rho,\,\sigma\in\mathcal{S}_{[n]}$ we have
\begin{equation}
    \|\rho-\sigma\|_{k\textnormal{-}W_1\textnormal{loc}} \ge \frac{1}{2}\sum_{x\in[n]}\|\rho_x - \sigma_x\|_1\,.
\end{equation}
\end{prop}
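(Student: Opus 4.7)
The plan is to use the dual characterization of the $k$-local quantum $W_1$ norm from \autoref{defn:kW1}: to lower bound $\|\rho-\sigma\|_{k\textnormal{-}W_1\textnormal{loc}}$ it suffices to exhibit a single $k$-local observable $H$ with $\|H\|_{\textnormal{loc}}\le 1$ for which $\tr[(\rho-\sigma)\,H]$ already equals the right-hand side of the claimed inequality.

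First, for each spin $x\in[n]$ I would pick a self-adjoint observable $h_x$ acting only on $x$ that witnesses the trace distance between the single-site marginals. Concretely, from the Jordan decomposition $\rho_x-\sigma_x=\Delta_x^+-\Delta_x^-$ into orthogonally supported positive parts, I would take $h_x=\tfrac{1}{2}(P_x^+-P_x^-)$, where $P_x^\pm$ are the support projectors of $\Delta_x^\pm$. Then $\|h_x\|\le\tfrac12$, and using the tracelessness of $\rho_x-\sigma_x$ together with $\tr\Delta_x^+=\tr\Delta_x^-=\tfrac12\|\rho_x-\sigma_x\|_1$ one gets $\tr[(\rho_x-\sigma_x)\,h_x]=\tfrac12\|\rho_x-\sigma_x\|_1$, which is the standard variational characterization of the trace norm.

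Next, I would set $H:=\sum_{x\in[n]} h_x$, which is manifestly $1$-local and hence $k$-local for every $k\ge 1$. Plugging the trivial decomposition $\tilde h_{\{x\}}=h_x$, $\tilde h_\Lambda=0$ for $|\Lambda|\ne 1$ into the definition \eqref{eq:loc}, each spin $x$ appears in exactly one summand, so
\begin{equation}
\|H\|_{\textnormal{loc}}\le 2\max_{x\in[n]}\sum_{\Lambda\ni x}\|\tilde h_\Lambda\|=2\max_{x\in[n]}\|h_x\|\le 1\,.
\end{equation}
Since each $h_x$ acts trivially off $\{x\}$, we have $\tr[(\rho-\sigma)\,h_x]=\tr[(\rho_x-\sigma_x)\,h_x]$, so inserting $H$ into \eqref{eq:dual def} and summing over $x$ yields the claimed bound.

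There is no serious obstacle here: the argument is a direct duality computation. The only point that requires any attention is keeping track of the factor of two in the definition of $\|\cdot\|_{\textnormal{loc}}$, which forces the $\tfrac12$ in $h_x=\tfrac12(P_x^+-P_x^-)$ and ultimately accounts for the $\tfrac12$ on the right-hand side of the proposition.
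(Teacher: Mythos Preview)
Your argument is correct. You exhibit an explicit $1$-local witness observable in the dual formulation \eqref{eq:dual def}, and the bookkeeping with the factor of two is handled properly.

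The paper takes a different route: it iterates the superadditivity property (\autoref{prop:tens}) to get $\|\rho-\sigma\|_{k\textnormal{-}W_1\textnormal{loc}}\ge\sum_{x\in[n]}\|\rho_x-\sigma_x\|_{k\textnormal{-}W_1\textnormal{loc}}$, and then invokes the $n=1$ identification with the trace distance (\autoref{prop:trace}) to conclude. Your approach is more self-contained---it works directly from the definition of the local norm and does not depend on having established superadditivity first---while the paper's proof is shorter once those structural propositions are in hand. Both arguments are equally valid and essentially dual to one another: yours works on the observable side, the paper's on the state side via the primal characterization underlying \autoref{prop:tens}.
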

\begin{proof}
    Applying repeatedly \autoref{prop:tens} we get
    \begin{equation}
        \|\rho-\sigma\|_{k\textnormal{-}W_1\textnormal{loc}} \ge \sum_{x\in[n]}\|\rho_x - \sigma_x\|_{k\textnormal{-}W_1\textnormal{loc}} = \frac{1}{2}\sum_{x\in[n]}\|\rho_x - \sigma_x\|_1\,,
    \end{equation}
where the last equality follows from \autoref{prop:trace}.
The claim follows.
\end{proof}
\item It recovers the Hamming distance for the states of the computational basis:
\begin{cor}
    For any $x,\,y\in[q]^n$,
    \begin{equation}
    \left\||x\right\rangle\left\langle x| - |y\right\rangle\left\langle y|\right\|_{k\textnormal{-}W_1\textnormal{loc}} = h(x,y) = \left|\left\{i\in[n] : x_i\neq y_i\right\}\right|\,.
    \end{equation}
\end{cor}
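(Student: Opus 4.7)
The plan is to prove the two inequalities $\|\cdot\|_{k\text{-}W_1\text{loc}} \le h(x,y)$ and $\|\cdot\|_{k\text{-}W_1\text{loc}} \ge h(x,y)$ separately, using the primal formulation from \autoref{prop:primal} for the upper bound and the single-spin lower bound of \autoref{prop:single} for the matching lower bound.

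For the upper bound, I would construct an explicit feasible point of the primal program \eqref{eq:Wloc}. Let $D = \{i \in [n] : x_i \neq y_i\}$ so that $h(x,y) = |D|$. Set $a_i = 1$ if $i \in D$ and $a_i = 0$ otherwise. For any region $\Lambda\subseteq[n]$ with $|\Lambda|\leq k$, I would note that the marginals on $\Lambda$ are the orthogonal rank-one projectors $|x_\Lambda\rangle\langle x_\Lambda|$ and $|y_\Lambda\rangle\langle y_\Lambda|$, so
\begin{equation}
\left\|(|x\rangle\langle x|)_\Lambda - (|y\rangle\langle y|)_\Lambda\right\|_1 = \begin{cases} 2 & \text{if } \Lambda\cap D\neq\emptyset,\\ 0 & \text{if } \Lambda\cap D=\emptyset,\end{cases}
\end{equation}
and in both cases this is bounded by $2\sum_{i\in\Lambda}a_i$ (which is $\geq 2$ in the first case and $=0$ in the second). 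Thus $(a_i)_{i\in[n]}$ is feasible, giving $\||x\rangle\langle x|-|y\rangle\langle y|\|_{k\text{-}W_1\text{loc}} \leq \sum_{i\in[n]} a_i = |D| = h(x,y)$.

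For the matching lower bound, I would invoke \autoref{prop:single}: since the single-spin marginal of $|x\rangle\langle x|$ at site $i$ is the pure state $|x_i\rangle\langle x_i|$,
\begin{equation}
\left\||x\rangle\langle x|-|y\rangle\langle y|\right\|_{k\text{-}W_1\text{loc}} \;\geq\; \frac{1}{2}\sum_{i\in[n]}\bigl\||x_i\rangle\langle x_i|-|y_i\rangle\langle y_i|\bigr\|_1 \;=\; \sum_{i\in D} 1 \;=\; h(x,y),
\end{equation}
since orthogonal pure states have trace distance $2$ and equal pure states have trace distance $0$. Combined with the upper bound this yields the claimed equality.

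This is a very short proof and I do not expect any real obstacle: the only nontrivial ingredients are the primal characterization \autoref{prop:primal} and the single-spin bound \autoref{prop:single}, both already established earlier in the section. The argument is essentially identical to the classical Wasserstein-$1$/Hamming-distance statement, lifted to the local quantum setting via these two structural results.
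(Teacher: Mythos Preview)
Your proof is correct and self-contained: the primal feasibility argument for the upper bound and the single-spin lower bound from \autoref{prop:single} are exactly the right two ingredients, and both steps are carried out cleanly. The paper itself does not spell out a proof but simply points to the analogous result in \cite{depalma2023classical}; your argument is the natural one and almost certainly coincides with what that reference does.
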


\begin{proof}
Analogous to \cite[Corollary 2.1]{depalma2023classical}.
\end{proof}

\item It is contracting with respect to the action of single-spin quantum channels:
\begin{prop}
    For any $\Delta\in\mathcal{O}_{n}^T$ and any quantum channel $\Phi$ acting on a single spin, we have
    \begin{equation}
        \|\Phi(\Delta)\|_{k\textnormal{-}W_1\textnormal{loc}} \le \|\Delta\|_{k\textnormal{-}W_1\textnormal{loc}}.
    \end{equation}
\end{prop}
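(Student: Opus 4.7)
The plan is to use the dual characterization \eqref{eq:defW1loc} of the $k$-local quantum $W_1$ norm and to transfer the action of $\Phi$ from the state to the observable via the Heisenberg adjoint $\Phi^*$. Since $\Phi$ is a quantum channel (completely positive and trace-preserving) acting on a single spin $y\in[n]$, extended as the identity on the remaining spins, its adjoint $\Phi^*$ is a unital completely positive map that acts non-trivially only on spin $y$. For any $\Delta\in\mathcal O_{[n]}^T$ and any $H\in\mathcal O_{[n]}^{(k)}$, one has $\mathrm{Tr}[\Phi(\Delta)\,H]=\mathrm{Tr}[\Delta\,\Phi^*(H)]$.

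First, I would verify that $\Phi^*$ maps the feasible set of the dual problem \eqref{eq:defW1loc} into itself. Take any $k$-local $H\in\mathcal O_{[n]}^{(k)}$ with $\|H\|_{\textnormal{loc}}\le 1$ and let $H=\sum_{\Lambda\subseteq[n]:|\Lambda|\le k}h_\Lambda$ be a decomposition achieving (or approaching) the local norm, with each $h_\Lambda\in\mathcal O_\Lambda$. Applying $\Phi^*$ term by term gives
\begin{equation}
\Phi^*(H) = \sum_{\Lambda\subseteq[n]:|\Lambda|\le k}\Phi^*(h_\Lambda)\,,
\end{equation}
where each summand is still self-adjoint and still acts only on the spins of $\Lambda$ (since $\Phi^*$ is the identity outside spin $y$, so $\Phi^*(h_\Lambda)\in\mathcal O_\Lambda$ regardless of whether $y\in\Lambda$). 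Hence $\Phi^*(H)$ is again $k$-local.

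Next, I would use the fact that every unital completely positive map is a contraction in operator norm, so $\|\Phi^*(h_\Lambda)\|\le\|h_\Lambda\|$ for each $\Lambda$. Plugging this into the definition \eqref{eq:loc} of the local norm yields
\begin{equation}
\|\Phi^*(H)\|_{\textnormal{loc}} \le 2\max_{x\in[n]}\sum_{\Lambda\ni x}\|\Phi^*(h_\Lambda)\| \le 2\max_{x\in[n]}\sum_{\Lambda\ni x}\|h_\Lambda\|\,,
\end{equation}
and minimizing the right-hand side over all admissible decompositions of $H$ gives $\|\Phi^*(H)\|_{\textnormal{loc}}\le\|H\|_{\textnormal{loc}}\le 1$.

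Finally, combining these ingredients,
\begin{equation}
\|\Phi(\Delta)\|_{k\textnormal{-}W_1\textnormal{loc}} = \max_{\|H\|_{\textnormal{loc}}\le 1}\mathrm{Tr}[\Delta\,\Phi^*(H)] \le \max_{\|H'\|_{\textnormal{loc}}\le 1}\mathrm{Tr}[\Delta\,H'] = \|\Delta\|_{k\textnormal{-}W_1\textnormal{loc}}\,,
\end{equation}
where both maxima are taken over $k$-local self-adjoint observables, and in the middle step we used that $H'=\Phi^*(H)$ ranges over a subset of the feasible set of the right-hand side. There is no real obstacle here; the only point worth double-checking is that the infimum over decompositions in \eqref{eq:loc} is attained (or can be approached), so that the argument above is valid; this is a standard compactness argument that I would spell out only briefly.
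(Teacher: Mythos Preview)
Your argument is correct and is precisely the natural duality argument: pass to the Heisenberg picture, use that $\Phi^*$ acts only on one spin so it preserves the support of each $h_\Lambda$, and invoke the operator-norm contractivity of unital completely positive maps to conclude that $\Phi^*$ maps the feasible set $\{H\in\mathcal O_{[n]}^{(k)}:\|H\|_{\textnormal{loc}}\le 1\}$ into itself. The paper does not spell out a proof here but simply refers to \cite[Proposition 2.8]{depalma2023classical}, whose argument is exactly of this type (for the more general local norm with penalties), so your approach coincides with the intended one.
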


\begin{proof}
Analogous to \cite[Proposition 2.8]{depalma2023classical}.
\end{proof}
\end{itemize}

\subsection{A transportation-cost inequality}

The concentration inequality \autoref{thm:bound2} can be applied to prove that any quantum state $\sigma$ with exponentially decreasing correlations, as in \autoref{ass:length}, satisfies a transportation-cost inequality of the following form: 
\begin{equation}
f(\|\rho-\sigma\|_{k\textnormal{-}W_1\textnormal{loc}})\leq S_M(\rho\|\sigma)\qquad \forall \rho\in\mathcal S_{[n]}
\end{equation}
where $f$ is an increasing function and $S_M$ is the measured relative entropy.

\sp

\begin{defn}
    Let $\rho$ and $\sigma$ be density matrices on a finite-dimensional Hilbert space $\mathcal H$.
    The \emph{measured relative entropy} between $\rho$ and $\sigma$ is defined as:
\begin{equation}
    S_M(\rho\|\sigma):=\sup_{\{A_a\}_{a\in X}}\sum_{a\in X}\tr(A_a \rho)\log\frac{\tr(A_a\rho)}{\tr(A_a\sigma)},
\end{equation}
where the supremum is taken over all the POVMs $\{A_a\}_{a\in X}$ on $\mathcal H$.
\end{defn}

\sp

Let us recall that a POVM $\{A_a\}_{a\in X}$ on $\mathcal H$, where $X$ is a finite set, is a collection of positive semidefinite operators such that $\sum_{a\in X} A_a=I$ and provides the general mathematical description of a quantum measurement process.
Intuitively, the notion of measured relative entropy is given by the maximization, over all the measurement processes, of the Kullback-Leibler divergence between the classical probability distributions of the measurement outcomes associated to the considered states. The measured relative entropy admits the following characterization \cite{berta2017variational}:  

\sp

\begin{lem}\label{lem:entropy}
    The following idenity holds:
    \begin{equation}\label{eq:m_entropy}
        S_M(\rho\|\sigma)=\sup_{A>0}\left( \tr(\rho\log A)-\log\tr(\sigma A)\right).
    \end{equation}
\end{lem}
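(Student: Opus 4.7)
The plan is to prove the two directions of the identity separately, with the classical Donsker--Varadhan variational formula for Kullback--Leibler divergence as the common engine. Recall that for two probability distributions $p,q$ on a finite set $X$ one has
\begin{equation}
D(p\|q) = \sup_{f:X\to\mathbb{R}}\left(\sum_{a\in X}p_a f_a - \log\sum_{a\in X}q_a e^{f_a}\right)\,.
\end{equation}
I would invoke this as a known fact (a short proof is via Fenchel duality of $x\log x$).

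For the lower bound $S_M(\rho\|\sigma) \ge \sup_{A>0}\left[\tr(\rho\log A) - \log\tr(\sigma A)\right]$, I would fix $A>0$ and pick the POVM consisting of the spectral projectors of $A$. Writing $A=\sum_a \lambda_a P_a$, the induced distributions are $p_a=\tr(P_a\rho)$, $q_a=\tr(P_a\sigma)$, and applying the classical variational formula with the specific choice $f_a=\log\lambda_a$ gives
\begin{equation}
D(p\|q)\ge \sum_a p_a\log\lambda_a-\log\sum_a q_a\lambda_a = \tr(\rho\log A)-\log\tr(\sigma A)\,,
\end{equation}
since $\log A=\sum_a (\log\lambda_a) P_a$ because the $P_a$ are mutually orthogonal projections summing to $I$. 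Taking the supremum over $A$ and recalling that $D(p\|q)\le S_M(\rho\|\sigma)$ by definition yields the inequality.

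For the reverse inequality, I would fix an arbitrary POVM $\{A_a\}_{a\in X}$, set $p_a=\tr(A_a\rho)$ and $q_a=\tr(A_a\sigma)$, and apply the classical variational formula. For a test function $f$ on $X$, I would construct the operator $A_f := \sum_a e^{f_a}A_a > 0$. Then $\tr(\sigma A_f)=\sum_a e^{f_a}q_a$ matches the classical expression exactly. The nontrivial step is to show
\begin{equation}
\tr(\rho\log A_f)\ge \sum_a f_a p_a\,.
\end{equation}
For this I would consider the unital completely positive map $\Phi^\ast(Y)=\sum_a \langle a|Y|a\rangle A_a$ from diagonal operators on $\mathbb{C}^{|X|}$ into operators on $\mathcal H$, and use operator Jensen's inequality applied to the operator-concave function $\log$: since $\Phi^\ast$ is unital and positive, $\log\Phi^\ast(Y)\ge \Phi^\ast(\log Y)$ for $Y>0$. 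Choosing $Y=\mathrm{diag}(e^{f_a})$ gives $\log A_f\ge \sum_a f_a A_a$; taking the expectation against $\rho$ then yields the required bound. Combining with the classical variational formula,
\begin{equation}
\sum_a p_a\log\tfrac{p_a}{q_a}\le \sup_f\left[\tr(\rho\log A_f)-\log\tr(\sigma A_f)\right]\le \sup_{A>0}\left[\tr(\rho\log A)-\log\tr(\sigma A)\right]\,,
\end{equation}
and taking the supremum over POVMs finishes the proof.

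The main technical obstacle is the operator Jensen step: both the operator-concavity of $\log$ and the correct identification of $\Phi^\ast$ as a unital positive map on the abelian algebra of diagonal matrices need to be stated carefully. Everything else is a direct translation of the classical Donsker--Varadhan formula into the quantum language via spectral calculus in one direction and via the POVM-induced channel in the other.
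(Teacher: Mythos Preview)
The paper does not actually prove this lemma; it simply states the variational formula and cites \cite{berta2017variational} as its source. Your argument is correct and supplies a self-contained proof where the paper gives none. Both directions are handled cleanly: the lower bound via the spectral measurement of $A$ combined with the classical Donsker--Varadhan inequality, and the upper bound via the operator Jensen inequality for the operator-concave function $\log$ applied to the unital positive map $\Phi^\ast(g)=\sum_a g_a A_a$ from the commutative algebra $\ell^\infty(X)$ into $B(\mathcal H)$. The only point worth stating explicitly is that $A_f=\sum_a e^{f_a}A_a\ge(\min_a e^{f_a})\,I>0$, so $A_f$ is indeed admissible in the supremum, and that the Davis--Choi form of operator Jensen for unital positive maps applies because the domain is abelian (hence positivity already yields complete positivity). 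This is essentially the argument of the cited reference, so your proof is in line with the literature the paper defers to.
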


The variational expression \eqref{eq:m_entropy} can be used to bound from below the measured relative entropy by means of a function of the $k$-local quantum $W_1$ distance obtaining a transportation cost inequality.  

\sp

\begin{thm}\label{thm:bound_m_entropy}
    Let $\rho$ and $\sigma$ be states of $n$ quantum spins.
    Let $\sigma$ have correlation length $\xi$ as in \autoref{ass:length} with respect to a distance satisfying \autoref{ass:ball}.
    Then, the measured relative entropy between $\rho$ and $\sigma$ is bounded as follows: 
    \begin{equation}\label{eq:bound_m_entropy}
         S_M(\rho\|\sigma)\geq n^\frac{1}{2d+1} f(w)-\log(C+1),
    \end{equation}
    where $c$ is a positive constant, $w=\frac{1}{n}\|\rho-\sigma\|_{k\textnormal{-}W_1\textnormal{loc}}$, and
    \begin{equation}\label{eq:deff}
f(w)=\frac{F\left({k\,w}\right)}{k^2A^2\|H\|_{\textnormal{loc}}\left[\frac{\xi}{kA^2\|H\|_{\textnormal{loc}}}\left(ks^*(k\,w)+ F\left({k\,w}\right)\right)  \right]^{\frac{2d}{2d+1}} },
\end{equation}
where $F(x):=\max_ss(x-e^s+1+\frac{s}{2})$ and $s^*(x)=\textnormal{argmax}_s s(x-e^s+1+\frac{s}{2})$. 
\end{thm}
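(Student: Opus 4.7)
The strategy is to derive a lower bound on the measured relative entropy $S_M(\rho\|\sigma)$ from the moment-generating-function bound of \autoref{thm:bound2}, mirroring the Legendre-type duality between concentration and transportation-cost inequalities. First I would invoke the variational formula of \autoref{lem:entropy} with the test operator $A = e^{tH}$ for an arbitrary $t>0$ and an arbitrary $k$-local self-adjoint $H$ satisfying $\|H\|_{\textnormal{loc}}\le 1$. This gives the pointwise inequality
\begin{equation*}
S_M(\rho\|\sigma) \;\ge\; t\,\mathrm{Tr}(\rho H) \;-\; \log\mathrm{Tr}(\sigma\,e^{tH}).
\end{equation*}
I would then choose $H$ to be the maximizer of the dual characterization \eqref{eq:dual def} of the $k$-local $W_1$ norm, so that $\mathrm{Tr}[(\rho-\sigma)H] = \|\rho-\sigma\|_{k\textnormal{-}W_1\textnormal{loc}} = nw$, and apply \autoref{thm:bound2} to the observable $tH$, which is $k$-local with $\|tH\|_{\textnormal{loc}} \le t$.

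Using the crude inequality $X + CY \le (C+1)\max\{X,Y\}$ on the two exponentials produced by \autoref{thm:bound2}, one obtains, for every $l>0$,
\begin{equation*}
\log\mathrm{Tr}(\sigma\,e^{tH}) \;\le\; t\,\mathrm{Tr}(\sigma H) + \log(C+1) + \max\!\left\{\tfrac{n}{k}\bigl(e^{k A^2 l^{2d}\,t}-1-\tfrac{k A^2 l^{2d}\,t}{2}\bigr)t,\ nt-\tfrac{l}{\xi}\right\}.
\end{equation*}
Subtracting from the variational lower bound cancels $t\,\mathrm{Tr}(\sigma H)$ against the corresponding piece of $t\,\mathrm{Tr}(\rho H)$ and leaves
\begin{equation*}
S_M(\rho\|\sigma) \;\ge\; tnw - \log(C+1) - \max\!\left\{\tfrac{n}{k}\bigl(e^{k A^2 l^{2d}\,t}-1-\tfrac{k A^2 l^{2d}\,t}{2}\bigr)t,\ nt-\tfrac{l}{\xi}\right\}.
\end{equation*}
Crucially, the right-hand side no longer depends on $H$, so that $t$ and $l$ are free parameters that may be tuned freely.

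The tuning proceeds in exact parallel with the proof of \autoref{thm:concentration2}, with the variational parameter taking the role of the Markov parameter there. I would make the substitution $s = kA^2 l^{2d}\,t$, turning the two branches of the $\max$ into $\tfrac{n}{k^2 A^2 l^{2d}}\,s(e^s-1-s/2)$ and $\tfrac{ns}{kA^2 l^{2d}}-\tfrac{l}{\xi}$, respectively, and $tnw$ into $\tfrac{nws}{kA^2 l^{2d}}$. Choosing $s=s^*(kw)$ maximises $s(kw-e^s+1+s/2)$ and produces the contribution $\tfrac{nF(kw)}{k^2 A^2 l^{2d}}$ from the first branch; then selecting $l$ so that the two branches balance yields a critical radius $l^{2d+1}\propto \xi n\bigl(ks^*(kw)+F(kw)\bigr)$, which on substitution gives the announced $n^{1/(2d+1)}$ scaling and the rate function $f(w)$ of \eqref{eq:deff}, along with the additive $-\log(C+1)$ term.

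The only delicate point is to control the $\max$ in the optimisation: strictly speaking one should verify that balancing the two branches is at least near-optimal, but since any choice of $t,l$ already yields a valid lower bound on $S_M(\rho\|\sigma)$, one simply inserts the balanced values to obtain the stated inequality without needing a true joint optimum. The remaining ingredients---the duality formula for $S_M$, the local-norm bound on $\langle e^{tH}\rangle_\sigma$, and the dual characterisation of $\|\cdot\|_{k\textnormal{-}W_1\textnormal{loc}}$---combine cleanly, so I do not anticipate further obstacles.
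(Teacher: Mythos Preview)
Your proposal is correct and follows the same route as the paper: invoke the variational formula of \autoref{lem:entropy}, choose the test Hamiltonian via the dual definition \eqref{eq:dual def}, plug in the moment-generating bound of \autoref{thm:bound2}, and then tune $(s,l)$ exactly as in the proof of \autoref{thm:concentration2}. The only cosmetic differences are that the paper works with the sum of the two exponentials rather than passing to $(C+1)\max\{\cdot,\cdot\}$, and that it drops the harmless factor $e^{-tnw}\le 1$ from the second exponential before balancing---which is why its critical radius involves $ks^*+F(kw)$ rather than the slightly sharper $ks^*(1-w)+F(kw)$ your balance would produce; weakening your second branch by the same step recovers \eqref{eq:deff} verbatim.
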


\begin{proof}
    We apply the claim of \autoref{lem:entropy} for $E=e^H$ where $H=\sum_{\Lambda\subseteq[n]} h_\Lambda$ is a self-adjoint operator given by the sum of local terms $h_\Lambda$ acting non trivially on $\Lambda\subseteq[n]$ such that $\left\langle h_\Lambda\right\rangle=0$ and  $|\Lambda|\leq k$  $\forall\Lambda$, where the angle brackets denote the expectation with respect to $\sigma$. Therefore:
    \begin{equation}\label{eq:D_M1}
        S_M(\rho\|\sigma)\geq \tr(\rho H)-\log \left\langle e^H\right\rangle\,.
    \end{equation}
    We can choose the Hamiltonian $H$ such that $\tr[(\rho-\sigma)H]=\|H\|_{\textnormal{loc}} \|\rho-\sigma\|_{k\textnormal{-}W_1\textnormal{loc}}$ using the fact that $\|\,\,\,\|_{k\textnormal{-}W_1 \textnormal{loc}}$ is the dual of $\|\,\,\,\|_{\textnormal{loc}}$ as provided by \eqref{eq:dual def}. Such a Hamiltonian can be always chosen in such a way that $\left\langle\sigma\right\rangle=0$, since $\|H\|_{\textnormal{loc}}=\|H-\left\langle\sigma\right\rangle I\|_{\textnormal{loc}}$ as $\|\,\,\,\|_{\textnormal{loc}}$ is a semi-norm with $\|I\|_{\textnormal{loc}}=0$. Moreover, by the arbitrary choice of the Hamiltonian, we can control the value of $\|H\|_{\textnormal{loc}}$.
    From \eqref{eq:D_M1}, we have:
    \begin{equation}
        S_M(\rho\|\sigma)\geq\tr[(\rho-\sigma)H]-\log \left\langle e^H\right\rangle\geq \|H\|_{\textnormal{loc}} \|\rho-\sigma\|_{k\textnormal - W_1 \textnormal{loc}}-\log \left\langle e^H\right\rangle,
    \end{equation}
Let us set $\|H\|_{\textnormal{loc}}=t$ and $\|\rho-\sigma\|_{k\textnormal{-}W_1\textnormal{loc}}=W$, then: 
\begin{equation}\label{eq:D_M2}
    S_M(\rho\|\sigma)\geq tW -\log \left\langle e^H\right\rangle.
\end{equation}
Taking the exponential of \eqref{eq:D_M2} we obtain:
\begin{equation}
    e^{-S_M(\rho\|\sigma)}\leq e^{-tW}\left\langle e^H\right\rangle=e^{-tnw}\left\langle e^H\right\rangle.
\end{equation}
This bound can be improved optimizing w.r.t. $t$ following the argument used in the proof of \autoref{thm:concentration2}. More precisely, we can apply \autoref{thm:bound2} to bound $\left\langle e^H\right\rangle$ taking into account that $\left\langle H\right\rangle=0$, thus:
\begin{equation}
  e^{-S_M(\rho\|\sigma)} \leq \exp\left[-\frac{n}{k^2A^2 l^{2d}{\|H\|_{\textnormal{loc}}}} s\left({k\,w}-e^s+1+\frac{s}{2}\right)  \right]+ C\exp\left[\frac{ns}{k A^2\, l^{2d}}-\frac{l}{\xi}  \right],
\end{equation}
where $s=tkA^2l^{2d}\|H\|_{\textnormal{loc}}$. As in the proof of \autoref{thm:concentration2}, we can improve the bound minimizing over $s$ and requiring the balance of the two exponential terms to fix $l$, this yields: 
\begin{equation}
e^{-S_M(\rho\|\sigma)}\leq (C+1)\exp\left[-\frac{n^{\frac{1}{2d+1}}F\left({k\,w}\right)}{k^2A^2\|H\|_{\textnormal{loc}}\left[\frac{\xi}{kA^2\|H\|_{\textnormal{loc}}}\left(k\,s^*(k\,w)+ F\left({k\,w}\right)\right)  \right]^{\frac{2d}{2d+1}} } \right]\,.
\end{equation}
We then obtain the inequality
\begin{equation}
\exp\left[{-S_M(\rho\|\sigma)}\right] \leq \exp\left[\log(C+1)-n^{\frac{1}{2d+1}}f(w)\right]\,,
\end{equation}
which implies the claim.

\end{proof}

\section{Equivalence of the ensembles of quantum statistical mechanics}\label{sec:ensembles}

The three main ensembles employed in quantum statistical mechanics to compute the equilibrium properties of quantum systems are the canonical ensemble, the microcanonical ensemble and the diagonal ensemble.
The quantum state associated to the canonical ensemble is the Gibbs state, which describes the physics of a system that is at thermal equilibrium with a large bath at a given temperature.
The diagonal and microcanonical ensembles both describe the physics of an isolated quantum system, and the associated states are convex combinations of the eigenstates of the Hamiltonian.
The microcanonical ensemble assumes a uniform probability distribution for the energy in a given energy shell.
The diagonal ensemble includes all the states that are diagonal in the eigenbasis of the Hamiltonian, and in particular it includes the eigenstates themselves.
For many quantum systems, the canonical and microcanonical ensembles give the same expectation values for local observables if the corresponding states have approximately the same average energy.
A lot of effort has been devoted to determining conditions under which the two ensembles are equivalent \cite{lima1971equivalence,lima1972equivalence,muller2015thermalization,brandao_equivalence_2015}.
The most prominent among such conditions are short-ranged interactions and a finite correlation length, but so far analytical proofs could be obtained only in the case of regular lattices \cite{brandao_equivalence_2015}.
The situation is more complex considering also the diagonal ensemble which includes all the states that are diagonal in the eigenbasis of the Hamiltonian.
The condition under which this ensemble is equivalent to the microcanonical and canonical ensembles is called Eigenstate Thermalization Hypothesis (ETH) \cite{deutsch1991quantum,srednicki_chaos_1994,gogolin2016equilibration,reimann2018dynamical,de2015necessity}, and states that the expectation values of local observables on the eigenstates of the Hamiltonian are a smooth function of the energy, \emph{i.e.}, for any given local observable, \emph{any} two eigenstates with approximately the same energy yield approximately the same expectation value.
The ETH is an extremely strong condition on the Hamiltonian and several quantum systems, including all integrable systems, do not satisfy it.
A weak version of the ETH has been formulated \cite{biroli2010effect,kuwahara2020eigenstate}, stating that for any given local observable, \emph{most} eigenstates in an energy shell yield approximately the same expectation value, or, more precisely, that the fraction of eigenstates yielding expectation values far from the Gibbs state with the same average energy vanishes in the thermodynamical limit.
The weak ETH implies the equivalence between the canonical and microcanonical ensembles, but is not sufficient to prove their equivalence with the diagonal ensemble.
Under the hypothesis of finite correlation length in the Gibbs state, an analytical proof of the weak ETH is available only for regular lattices \cite{kuwahara2020eigenstate}.

In this paper, we consider the equivalence of the statistical mechanical ensembles and the weak ETH from the perspective of the $k$-local quantum $W_1$ distance.
The closeness in such distance implies closeness of the expectation values of all the $k$-local observables with $O(1)$ local norm.
We stress that such observables do not need to be geometrically local, since they can contain terms acting on spins at arbitrary distance.
Our results are based on the transportation-cost inequality \autoref{thm:bound_m_entropy} and are similar in spirit to the results of \cite[Sec. 8]{de2022quantum}, which considers the equivalence of the ensembles with respect to the quantum $W_1$ distance of \cite{de2021quantum} employing transportation-cost inequalities for such distance.
The key difference between the present paper and Ref. \cite{de2022quantum} is that the $k$-local quantum $W_1$ distance that we employ is weaker than the quantum $W_1$ distance of \cite{de2022quantum}.
As a consequence, our \autoref{thm:bound_m_entropy} requires only exponentially decaying correlations.
On the contrary, the transportation-cost inequalities employed in \cite{de2022quantum} require a Hamiltonian that is the sum of local commuting terms (the subsequent Ref. \cite{onorati2023efficient} has proved transportation-cost inequalities for the quantum $W_1$ distance under the hypothesis of exponential decay of the conditional mutual information, which is however still stronger than exponential decay of correlations).
Therefore, we are able to prove the equivalence of the ensembles for a significantly larger class of Hamiltonians than the class for which the results of \cite[Sec. 8]{de2022quantum} apply, at the price of a weaker notion of equivalence.
We also stress that, contrarily to the results of Refs. \cite{brandao_equivalence_2015,kuwahara2020eigenstate}, we do not need to restrict to regular lattices, and our results do not require any regularity nor the notion of neighboring spins, but only the upper bound \autoref{ass:ball} to the growth of the size of the metric balls with the distance.

Let $H\in\mathcal{O}_{[n]}$ be the Hamiltonian of a quantum system of $n$ spins. A \emph{Gibbs state} for the Hamiltonian $H$ is:
\begin{equation}\label{eq:Gibbs}
    \omega:=\frac{e^{-\beta H}}{\tr(e^{-\beta H})},
\end{equation}
where $\beta$ is the inverse temperature.
The following \autoref{prop:equiv} implies that any state $\rho\in\mathcal{S}_{[n]}$ is close in the $k$-local $W_1$ distance to the Gibbs state $\omega$ with the same average energy, provided that $\omega$ has finite correlation length and approximately the same entropy as $\rho$, \emph{i.e.},
\begin{equation}
    S(\omega) - S(\rho) = o\left(n^\frac{1}{2d+1}\right)\,.
\end{equation}
Moreover, under the same hypothesis, the average reduced states over one spin of $\rho$ and $\omega$ are close in trace distance.

\sp

\begin{prop}\label{prop:equiv}
Let $\omega\in\mathcal{S}_{[n]}$ be a Gibbs state for the Hamiltonian $H\in\mathcal{O}_{[n]}$ with correlation length $\xi$ as in \autoref{ass:length} with respect to a distance satisfying \autoref{ass:ball}.
Then, any quantum state $\rho\in\mathcal{S}_{[n]}$ with the same average energy as $\omega$ satisfies
\begin{equation}
    \frac{1}{n}\left\|\rho-\omega\right\|_{k\textnormal{-}W_1\textnormal{loc}} \le f^{-1}\left(\frac{S(\omega) - S(\rho)}{n^\frac{1}{2d+1}}+c\right)\,,
\end{equation}
with $f$ as in \eqref{eq:deff}.
Moreover, let $\Lambda:\mathcal{S}_{[n]}\to\mathcal{S}(\mathbb{C}^d)$ be the quantum channel that computes the average marginal state over one spin, \emph{i.e.}, for any $\rho\in\mathcal{S}_{[n]}$,
\begin{equation}
    \Lambda(\rho) = \frac{1}{n}\sum_{x\in [n]}\rho_x\,.
\end{equation}
Then,
\begin{equation}
    \frac{1}{2}\left\|\Lambda(\rho) - \Lambda(\omega)\right\|_1 \le f^{-1}\left(\frac{S(\omega) - S(\rho)}{n^\frac{1}{2d+1}}+c\right)\,.
\end{equation}
\end{prop}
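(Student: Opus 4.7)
The plan is to combine the transportation-cost inequality \autoref{thm:bound_m_entropy} with the elementary identity relating the quantum relative entropy to the entropy gap when the reference state is a Gibbs state sharing the same average energy, and then to carry the resulting bound on the $k$-local $W_1$ distance down to the single-spin average through the structural properties already established in \autoref{sec:distance}.

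\textbf{Step 1 (reduction to relative entropy).} For $\omega = e^{-\beta H}/\tr\left(e^{-\beta H}\right)$ I would write
\begin{equation}
S(\rho\|\omega) = -S(\rho) + \beta\,\tr\left(\rho H\right) + \log\tr\left(e^{-\beta H}\right)\,,
\end{equation}
specialise to $\rho=\omega$ to extract $\log\tr\left(e^{-\beta H}\right)=S(\omega)-\beta\,\tr(\omega H)$, and then substitute back. Together with the hypothesis $\tr(\rho H)=\tr(\omega H)$, this yields the familiar identity
\begin{equation}
S(\rho\|\omega) = S(\omega)-S(\rho)\,,
\end{equation}
which in particular guarantees $S(\omega)\ge S(\rho)$ under the equal-energy assumption.

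\textbf{Step 2 (inverting the transportation-cost bound).} By the data-processing inequality for the quantum relative entropy applied to the quantum-to-classical channel associated with any POVM, one has $S_M(\rho\|\omega)\le S(\rho\|\omega)$. Plugging this into \autoref{thm:bound_m_entropy}, which applies since $\omega$ satisfies the correlation length hypothesis, and writing $w=\tfrac{1}{n}\|\rho-\omega\|_{k\textnormal{-}W_1\textnormal{loc}}$ gives
\begin{equation}
n^{\frac{1}{2d+1}}\,f(w) \;\le\; S_M(\rho\|\omega) + \log(C+1) \;\le\; S(\omega)-S(\rho) + \log(C+1)\,.
\end{equation}
Dividing by $n^{1/(2d+1)}\ge 1$ (absorbing the resulting factor into the additive constant $c:=\log(C+1)$) and applying $f^{-1}$, whose monotonicity comes from the strict monotonicity of $F$ noted after \autoref{thm:concentrationproduct}, delivers the first inequality in the statement.

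\textbf{Step 3 (passage to the single-spin average).} For the second bound, I would combine the triangle inequality for the trace norm with \autoref{prop:single}:
\begin{equation}
\frac{1}{2}\left\|\Lambda(\rho)-\Lambda(\omega)\right\|_1 \;\le\; \frac{1}{2n}\sum_{x\in[n]}\left\|\rho_x-\omega_x\right\|_1 \;\le\; \frac{1}{n}\,\|\rho-\omega\|_{k\textnormal{-}W_1\textnormal{loc}}\,,
\end{equation}
so the bound already proved for $w$ transfers verbatim.

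The only delicate point is the invertibility of $f$: beyond the monotonicity of the numerator $F(kw)$ in \eqref{eq:deff}, one must verify that the $s^*(kw)$-dependent denominator does not destroy overall monotonicity on the range of $w$ of interest. I expect this to reduce to a short calculus check on the Cramer-type function $s\mapsto s\bigl(x-e^s+1+s/2\bigr)$ and to be the main technical nuisance, though a routine one.
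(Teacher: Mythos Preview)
Your proposal is correct and follows essentially the same route as the paper: the identity $S(\rho\|\omega)=S(\omega)-S(\rho)$ for equal-energy states, the bound $S_M\le S$, inversion of the transportation-cost inequality from \autoref{thm:bound_m_entropy}, and then \autoref{prop:single} for the single-spin average. The paper does not pause over the invertibility of $f$, so your remark there is extra caution rather than a gap.
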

\begin{proof}
We have
\begin{equation}
    S_M(\rho\|\omega) \le S(\rho\|\omega) = S(\omega) - S(\rho)\,,
\end{equation}
where the last equality follows since $\rho$ and $\omega$ have the same average energy.
We then have from \autoref{thm:bound_m_entropy}
\begin{equation}
    \frac{1}{n}\left\|\rho-\omega\right\|_{k\textnormal{-}W_1\textnormal{loc}} \le f^{-1}\left(\frac{S_M(\rho\|\omega)}{n^\frac{1}{2d+1}}+c\right) \le f^{-1}\left(\frac{S(\omega) - S(\rho)}{n^\frac{1}{2d+1}}+c\right)\,.
\end{equation}

We have from \autoref{prop:single} and \autoref{prop:equiv}
\begin{align}
    \left\|\Lambda(\rho) - \Lambda(\omega)\right\|_1 \le \frac{1}{n}\sum_{x\in[n]}\left\|\rho_x - \omega_x\right\|_1 \le \frac{2}{n}\left\|\rho - \omega\right\|_{k\textnormal{-}W_1\textnormal{loc}} \le 2\,f^{-1}\left(\frac{S(\omega) - S(\rho)}{n^\frac{1}{2d+1}}+c\right)\,,
\end{align}
and the claim follows.
\end{proof}

Choosing $\rho$ to be diagonal in the eigenbasis of the Hamiltonian, \autoref{prop:equiv} implies that any convex combination of a sufficiently large number of eigenstates is close in $k$-local $W_1$ distance to the Gibbs state with the same average energy.
Such number of eigenstates can be a fraction $\exp\left(-o\left(n^\frac{1}{2d+1}\right)\right)$ of the total number of eigenstates appearing in a microcanonical state.
Therefore, \autoref{prop:equiv} constitutes an exponential improvement over the weak ETH.

\subsection{Width of the microcanonical energy shell}
We can apply the transportation-cost inequality provided by \autoref{thm:bound_m_entropy} to prove the convergence of the microcanonical ensemble to the canonical ensemble in the local $W_1$ distance as $n\rightarrow +\infty$.
Let $H=\sum_E E P(E)$ be the spectral decomposition of $H$. The \emph{microcanonical state} with energy $E$ and width $\Delta$ is 
\begin{equation}\label{eq:microcanonical}
    \omega_{E,\Delta}:=\frac{P(E,\Delta)}{\tr(P(E,\Delta))},
\end{equation}
where $P(E,\Delta)$ is the orthogonal projector onto the subspace spanned by the eigenvectors of $H$ with eigenvalues in $(E-\Delta, E]$.

\sp

\begin{thm}\label{thm:equiv}
    Let $\omega$ be a Gibbs state for the Hamiltonian $H\in\mathcal{O}_{[n]}$ with correlation length $\xi$ as in \autoref{ass:length} with respect to a distance satisfying \autoref{ass:ball} and let $\omega_{E^*,\Delta}$ be the microcanonical state with energy
    \begin{equation}
        E^*=\mathrm{argmax}_E\, e^{-\beta E} \tr\,P(E,\Delta)\,.
    \end{equation}
    Then:
    \begin{equation}
        \frac{1}{n}\|\omega_{E^*,\Delta}-\omega\|_{k\textnormal{-}W_1 loc}=o_{n\rightarrow \infty}(1) ,
    \end{equation}
    whenever $\Delta= \exp\left(-o\left(n^{\frac{1}{2d+1}}\right)\right)$ and $\Delta=o\left(n^{\frac{1}{2d+1}}\right)$.
\end{thm}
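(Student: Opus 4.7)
The strategy is to apply the transportation-cost inequality \autoref{thm:bound_m_entropy} with $\rho = \omega_{E^*,\Delta}$ and $\sigma = \omega$, then bound the measured relative entropy from above by a quantity of order $o\!\left(n^{1/(2d+1)}\right)$. Since both $\omega_{E^*,\Delta}$ and $\omega$ are diagonal in the eigenbasis of $H$ and therefore commute, a measurement in this common eigenbasis attains the variational supremum, so $S_M(\omega_{E^*,\Delta}\|\omega) \le S(\omega_{E^*,\Delta}\|\omega)$, the latter being the Umegaki relative entropy.

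The first step is to compute this relative entropy explicitly. Using $\omega = e^{-\beta H}/Z$ with $Z = \tr e^{-\beta H}$ and $\omega_{E^*,\Delta} = P(E^*,\Delta)/\tr P(E^*,\Delta)$, one gets
\begin{equation}
S(\omega_{E^*,\Delta}\|\omega) = \beta\,\tr\!\left(\omega_{E^*,\Delta}H\right) - \log\tr P(E^*,\Delta) + \log Z\,.
\end{equation}
Since the support of $\omega_{E^*,\Delta}$ lies in $(E^*-\Delta,E^*]$, the energy term is bounded as $\beta\tr(\omega_{E^*,\Delta}H) \le \beta E^* + |\beta|\Delta$, so
\begin{equation}
S(\omega_{E^*,\Delta}\|\omega) \le -\log\!\left(\frac{e^{-\beta E^*}\tr P(E^*,\Delta)}{Z}\right) + |\beta|\Delta\,.
\end{equation}

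The second step is to lower-bound $e^{-\beta E^*}\tr P(E^*,\Delta)/Z$ using the definition of $E^*$. Partition the spectrum of $H$ into at most $N_\Delta = O(n/\Delta)$ shells of width $\Delta$ (assuming the standard bound $\|H\| = O(n)$ for local Hamiltonians); on each shell, $e^{-\beta E}$ varies by at most a factor $e^{|\beta|\Delta}$, so
\begin{equation}
Z \le e^{|\beta|\Delta}\sum_{i=1}^{N_\Delta} e^{-\beta E_i}\tr P(E_i,\Delta) \le e^{|\beta|\Delta} N_\Delta\, e^{-\beta E^*}\tr P(E^*,\Delta)\,,
\end{equation}
by the maximality of $E^*$. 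Combining gives
\begin{equation}
S_M(\omega_{E^*,\Delta}\|\omega) \le 2|\beta|\Delta + \log N_\Delta = O(\Delta) + \log\!\left(\frac{O(n)}{\Delta}\right)\,.
\end{equation}
Under the hypotheses $\Delta = o\!\left(n^{1/(2d+1)}\right)$ and $\log(1/\Delta) = o\!\left(n^{1/(2d+1)}\right)$ (the latter equivalent to $\Delta = \exp(-o(n^{1/(2d+1)}))$), together with the trivial $\log n = o(n^{1/(2d+1)})$, all three terms are $o\!\left(n^{1/(2d+1)}\right)$.

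Finally, \autoref{thm:bound_m_entropy} applied to $w = \tfrac{1}{n}\|\omega_{E^*,\Delta}-\omega\|_{k\textnormal{-}W_1\textnormal{loc}}$ yields
\begin{equation}
f(w) \le \frac{S_M(\omega_{E^*,\Delta}\|\omega) + \log(C+1)}{n^{1/(2d+1)}} = o(1)\,,
\end{equation}
and since $f$ is continuous and strictly increasing from $f(0)=0$ (by the remark following \autoref{thm:concentrationproduct} applied to $F$), we conclude $w \to 0$, which is the claim. The only delicate point I anticipate is verifying that $f^{-1}$ is well behaved at the origin, so that $f(w) \to 0$ forces $w \to 0$; this follows from the explicit form \eqref{eq:deff} and the monotonicity of $F$, but the constants must be tracked carefully to ensure that the local norm $\|H\|_{\textnormal{loc}}$ appearing inside $f$ (the one used as a tuning parameter inside the proof of \autoref{thm:bound_m_entropy}) does not spoil the asymptotic conclusion.
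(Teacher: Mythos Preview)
Your proposal is correct and follows essentially the same route as the paper: bound $S_M(\omega_{E^*,\Delta}\|\omega)$ above by $S(\omega_{E^*,\Delta}\|\omega)$, control the latter by a shell decomposition of $Z$ together with the maximality of $E^*$, and then feed the resulting $o(n^{1/(2d+1)})$ bound into \autoref{thm:bound_m_entropy}. The paper makes the bound $\|H\|=O(n)$ explicit via \autoref{lem:norms} (after centering $H$ so that its extreme eigenvalues are opposite, giving $\|H\|\le \tfrac{n}{2}\|H\|_{\textnormal{loc}}$), whereas you invoke it as a standing assumption; otherwise the arguments coincide.
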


\begin{proof}
    \autoref{thm:bound_m_entropy} provides a bound on the measured relative entropy as a function of the $k$-local $W_1$ distance. Since $S(\omega_{E,\Delta}\|\omega)\geq S_M(\omega_{E,\Delta}\|\omega)$ let us control the relative entropy between the Gibbs and the microcanonical states:
    \begin{equation}\label{eq:boundS}
        S(\omega_{E^*,\Delta}\|\omega)=\ln\frac{\tr[e^{-\beta H}]}{\tr[P(E^*,\Delta)]}+\beta\ln\left[H\frac{P(E^*,\Delta)}{\tr[P(E^*,\Delta)]}\right]\leq \ln\frac{\tr[e^{-\beta H}]}{\tr[P(E^*,\Delta)]}+\beta E^*.
    \end{equation}
Let us observe that:
\begin{equation}
    \tr\left[\sum_{E} \frac{e^{-\beta E}}{\tr[e^{-\beta H}]}P(E) \right]=1,
\end{equation}
where the sum over $E$ is taken over all the spectrum of $H$.
Let us consider the partition function:
\begin{equation}\label{eq:Z}
    Z:=\tr\left[\sum_{E}e^{-\beta E} P(E)\right]=  \tr[e^{-\beta H}],
\end{equation}
and define the interval $I=(-\|H\|-\Delta, \|H\|]$, then:
\begin{equation}
    Z\leq \sum_{\nu\in\mathbb Z \,:\, \nu\Delta\in I} \tr\left[P(\nu\Delta,\Delta)\right] e^{-\beta \Delta(\nu-1).}
\end{equation}
By definition of $E^*$, we have:
\begin{equation}
    \tr[P(\nu\Delta,\Delta)] e^{-\beta\Delta(\nu-1)}\leq e^{\beta(\Delta-E^*)} \tr[P(E^*,\Delta)],
\end{equation}
therefore:
\begin{equation}\label{eq:boundZ}
    Z\leq \frac{2\|H\|+\Delta}{\Delta} e^{\beta(\Delta-E^*)} \tr[P(E^*,\Delta)].
\end{equation}
From equations \eqref{eq:Z} and \eqref{eq:boundZ}, we obtain:
\begin{equation}
   \frac{\tr[e^{-\beta H}]}{\tr[P(E^*,\Delta)]}\leq \left(\frac{2\|H\|}{\Delta}+1\right) e^{\beta(\Delta-E^*)},
\end{equation}
that is:
\begin{equation}\label{eq:log}
    \ln\left[\frac{\tr[e^{-\beta H}]}{\tr[P(E^*,\Delta)]}\right]\leq \ln\left[ \frac{2\|H\|}{\Delta} +1  \right]+\beta(\Delta-E^*).
\end{equation}
Without loss of generality, we can add a constant to $H$ such that its minimum and maximum eigenvalues become opposite.
Then, $\|H\|\leq \frac{n}{2}\|H\|_\textnormal{loc}$ by \autoref{lem:norms}. Inserting \eqref{eq:log} into \eqref{eq:boundS}, we have:
\begin{equation}
    S(\omega_{E^*,\Delta}\|\omega)\leq \ln\left[ \frac{n\|H\|_\textnormal{loc}}{\Delta}+1\right]+\beta\Delta.
\end{equation}
Applying \autoref{thm:bound_m_entropy} to bound $S(\omega_{E^*,\Delta}\|\omega)$ from below, we obtain:
    \begin{equation}\label{eq:delta,w}
    \ln\left[ \frac{n \|H\|_\textnormal{loc}}{\Delta}+1\right]+\beta\Delta+c\geq n^{\frac{1}{2d+1}}f(w),    
    \end{equation}
    where $c>0$ and $f$ is the positive increasing function introduced in \eqref{eq:bound_m_entropy}. 
   Therefore:
    \begin{equation}
        w\leq f^{-1}\left( n^{-\frac{1}{2d+1}} \ln\left[ \frac{n \|H\|_\textnormal{loc}}{\Delta}+1\right]+  n^{-\frac{1}{2d+1}}\beta\Delta+  n^{-\frac{1}{2d+1}}c       \right)\,.
    \end{equation}
    The right-hand side of the inequality above is $o(1)$ whenever $\ln(n/\Delta)=o\left(n^{\frac{1}{2d+1}}\right)$, that is $\Delta=n \exp\left(-o\left(n^{\frac{1}{2d+1}}\right)\right) = \exp\left(-o\left(n^{\frac{1}{2d+1}}\right)\right)$, and $\Delta=o\left(n^{\frac{1}{2d+1}}\right)$.
The claim follows.
\end{proof}

A similar result on the equivalence between the quantum canonical and microcanonical ensembles under the hypothesis of exponentially decaying correlations is proved in \cite{kuwahara2020eigenstate}. Such results states that for any local observable $O=\frac{1}{n}\sum_{v\in[n]} O_v$, where each $O_v$ is supported on the ball centered in $v$ with fixed radius and $\|O_v\|\leq 1$, the expectations over the Gibbs and the microcanonical states converges to the same value as $n\rightarrow +\infty$ whenever $\Delta=\exp\left(-O\left(n^\frac{1}{d+1}\right)\right)$. While our result poses a stronger requirement on the width of the energy shell, we do not have any requirement on the radius of the regions where the local terms of $O$ are supported, since we require such regions only to have bounded cardinality.

\section{Conclusions}\label{sec:concl}

In this paper we have proved new concentration inequalities for quantum spin systems and we have applied such inequalities to prove the equivalence between the canonical and microcanonical ensembles of quantum statistical mechanics.

First, we have introduced the \emph{local norm} \eqref{eq:loc} for observables of quantum spin systems.
Then, we have proved a concentration inequality for local observables whose local norm is $O(1)$ measured on product states (\autoref{thm:concentrationproduct}), stating that the probability of deviations from the average decays exponentially with the number of spins.
We have extended the result to the states with finite correlation length as in \autoref{ass:length}, proving that the probability of deviations from the average decays quasi exponentially with the number of spins (\autoref{thm:concentration2}). Remarkably, our results do not require any regular structure nor any notion of neighboring spins.

Moreover, we have defined the $k$-local quantum $W_1$ distance as the distance on the states of a quantum spin system that quantifies the distinguishability with respect to $k$-local observables.
We have proved a transportation-cost inequality stating that the $k$-local quantum $W_1$ distance between a generic state and a state with exponentially decaying correlations is upper bounded by a function of the relative entropy between the states (\autoref{thm:bound_m_entropy}).
Such inequality implies that any Gibbs state $\omega$ with exponentially decaying correlations is close w.r.t. the $k$-local $W_1$ distance, to any state $\rho$ with the same average energy and such that $S(\omega)-S(\rho)=o\left(n^{\frac{1}{2d+1}}\right)$ (\autoref{prop:equiv}), where $d$ is the power of the radius governing the growth of the volume of the metric
balls of the set of the spins.
This result implies the equivalence between the canonical and the microcanonical ensembles of quantum statistical mechanics for all the Hamiltonians whose Gibbs states have exponentially decaying correlations and all the microcanonical states whose energy shell is wide enough.
Furthermore, \autoref{prop:equiv} implies that any convex combination of eigenstates which constitute a fraction $\exp\left(-o\left(n^{\frac{1}{2d+1}}\right)\right)$ of the total number of eigenstates in a microcanonical shell is close in the $k$-local $W_1$ distance to the Gibbs state. This is an exponential improvement over the weak ETH.
Our results do not require any regular structure nor any notion of neighboring spins, and hold at any temperature provided the correlations in the Gibbs state decay exponentially with the distance.   

The main open problem we point out concerns the strong ETH, stating that the energy eigenstate themselves are close to the Gibbs states with the same average energy with respect to a suitable class of observables. From \cite[Sec. 8]{de2022quantum}, the strong ETH cannot be captured by the quantum $W_1$ distance of \cite{de2021quantum}.
Indeed, the von Neumann entropy per spin is continuous with respect to the $W_1$ distance per spin \cite{de2023wasserstein}, therefore no pure state can be close to any state with large entropy such a Gibbs state.
On the contrary, we do not expect such entropy continuity to hold for the $k$-local quantum $W_1$ distance, and therefore there is hope that this distance can capture the strong ETH.

\section*{Acknowledgements}
This work was partially supported by project SERICS (PE00000014) under the MUR National Recovery and Resilience Plan funded by the European Union - NextGenerationEU.

GDP has been supported by the HPC Italian National Centre for HPC, Big Data and Quantum Computing - Proposal code CN00000013 - CUP J33C22001170001 and by the Italian Extended Partnership PE01 - FAIR Future Artificial Intelligence Research - Proposal code PE00000013 - CUP J33C22002830006 under the MUR National Recovery and Resilience Plan funded by the European Union - NextGenerationEU.
Funded by the European Union - NextGenerationEU under the National Recovery and Resilience Plan (PNRR) - Mission 4 Education and research - Component 2 From research to business - Investment 1.1 Notice Prin 2022 - DD N. 104 del 2/2/2022, from title ``understanding the LEarning process of QUantum Neural networks (LeQun)'', proposal code 2022WHZ5XH – CUP J53D23003890006.
GDP and DP are members of the ``Gruppo Nazionale per la Fisica Matematica (GNFM)'' of the ``Istituto Nazionale di Alta Matematica ``Francesco Severi'' (INdAM)''.

\section*{Declarations}

%Some journals require declarations to be submitted in a standardised format. Please check the Instructions for Authors of the journal to which you are submitting to see if you need to complete this section. If yes, your manuscript must contain the following sections under the heading `Declarations':

\begin{itemize}
%\item Funding Not 
\item Conflict of interest: All authors have no conflicts of interest.
%\item Ethics approval and consent to participate
%\item Consent for publication
\item Data availability: The authors declare that the present manuscript has no associated data.
%\item Materials availability
%\item Code availability 
%\item Author contribution
\end{itemize}

%\noindent
%If any of the sections are not relevant to your manuscript, please include the heading and write `Not applicable' for that section. 

\begin{appendices}

\section{Auxiliary lemmas}\label{sec:aux}
\begin{lem}\label{lem:norms}
Let $H\in\mathcal{O}_{[n]}$ such that its minimum and maximum eigenvalues are opposite.
Then, $2\,\|H\| \le n\,\|H\|_{\textnormal{loc}}$.
\end{lem}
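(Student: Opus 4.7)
The plan is to bound $\|H\|$ by the sum of the operator norms of the local pieces in a fixed decomposition, use the opposite-eigenvalue hypothesis to discard the identity component, and then convert that sum into the local norm via a standard double-counting over spins.

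First I would fix any decomposition $H = \sum_{\Lambda\subseteq[n]} h_\Lambda$ with $h_\Lambda\in\mathcal{O}_\Lambda$. Since $h_\emptyset = cI$ for some $c\in\mathbb{R}$, writing $H' = \sum_{\Lambda\neq\emptyset} h_\Lambda$ gives $H = cI + H'$, so $\lambda_{\max}(H) = c + \lambda_{\max}(H')$ and $\lambda_{\min}(H) = c + \lambda_{\min}(H')$. The hypothesis $\lambda_{\max}(H) = -\lambda_{\min}(H)$ forces
\begin{equation}
\|H\| \;=\; \lambda_{\max}(H) \;=\; \tfrac{1}{2}\bigl(\lambda_{\max}(H) - \lambda_{\min}(H)\bigr) \;=\; \tfrac{1}{2}\bigl(\lambda_{\max}(H') - \lambda_{\min}(H')\bigr) \;\le\; \|H'\|\,,
\end{equation}
so the identity piece drops out.

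Next I would use the triangle inequality and the fact that $|\Lambda|\ge 1$ for $\Lambda\neq\emptyset$ to estimate
\begin{equation}
\|H\| \;\le\; \|H'\| \;\le\; \sum_{\emptyset\neq\Lambda\subseteq[n]} \|h_\Lambda\| \;\le\; \sum_{\emptyset\neq\Lambda\subseteq[n]} |\Lambda|\,\|h_\Lambda\|\,,
\end{equation}
and then swap the order of summation (the double-counting step):
\begin{equation}
\sum_{\emptyset\neq\Lambda\subseteq[n]} |\Lambda|\,\|h_\Lambda\| \;=\; \sum_{x\in[n]}\sum_{\Lambda\ni x}\|h_\Lambda\| \;\le\; n\,\max_{x\in[n]}\sum_{\Lambda\ni x}\|h_\Lambda\|\,.
\end{equation}

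Finally, taking the infimum over all admissible decompositions of $H$ and using the definition \eqref{eq:loc} yields $2\|H\| \le n\,\|H\|_{\textnormal{loc}}$. No step looks delicate; the only mild subtlety is recognizing that the hypothesis on opposite extreme eigenvalues is exactly what is needed to turn $\|H\|\le \|cI\|+\|H'\|$ (which would be useless, since $|c|$ is unconstrained) into the sharper $\|H\|\le \|H'\|$, and the rest is bookkeeping.
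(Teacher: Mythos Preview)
Your proof is correct and essentially identical to the paper's: both drop the identity component via the opposite-eigenvalue hypothesis, apply the triangle inequality, insert the factor $|\Lambda|\ge 1$, and double-count over spins. The only cosmetic difference is that the paper fixes the optimal decomposition at the outset while you take the infimum at the end, and you spell out the eigenvalue algebra behind $\|H\|\le\|H-h_\emptyset\|$ that the paper leaves implicit.
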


\begin{proof}
    Let
    \begin{equation}
H = \sum_{\Lambda\subseteq[n]}h_\Lambda\,,\qquad h_\Lambda\in\mathcal{O}_\Lambda
    \end{equation}
such that
\begin{equation}\label{eq:Hdec}
    \|H\|_{\textnormal{loc}} = 2\max_{x\in[n]}\sum_{\Lambda\ni x}\|h_\Lambda\|\,.
\end{equation}
We have
\begin{align}
    \|H\| &\le \|H-h_\emptyset\| \le \sum_{\Lambda\subseteq[n] : \Lambda\neq\emptyset}\|h_\Lambda\| \le \sum_{\Lambda\subseteq[n]}|\Lambda|\,\|h_\Lambda\| = \sum_{\Lambda\subseteq[n]}\sum_{x\in\Lambda}\|h_\Lambda\|\nonumber\\
    &= \sum_{x=1}^n\sum_{\Lambda\ni x}\|h_\Lambda\| \le \frac{n}{2}\,\|H\|_{\textnormal{loc}}\,.
\end{align}
The claim follows.
\end{proof}

\begin{lem}\label{lem:loc}
     Let $H\in\mathcal{O}_{[n]}$ and let $H = \sum_{\Lambda\subseteq[n]}\tilde{h}_\Lambda$
     be the decomposition of $H$ that achieves the local norm as in \eqref{eq:loc}.
     For any $\emptyset\neq\Lambda\subseteq[n]$, let $h_\Lambda := \tilde{h}_\Lambda - \left\langle\tilde{h}_\Lambda\right\rangle$, where the angle brackets denote the expectation with respect to a generic state.
     Then,
     \begin{equation}
         \sum_{\Lambda \ni v} \|h_\Lambda\|\leq \|H\|_{\textnormal loc}\qquad \forall\, v\in[n].
     \end{equation}
\end{lem}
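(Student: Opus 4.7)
The plan is to exploit the factor of $2$ that appears in the definition of the local norm in \eqref{eq:loc}; this factor is precisely what compensates for the cost of subtracting off the expectation value.

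First, I would use the hypothesis that the decomposition $H = \sum_\Lambda \tilde{h}_\Lambda$ achieves the minimum in \eqref{eq:loc}. By the definition of $\|H\|_{\textnormal{loc}}$, this yields the bound
\begin{equation}
\sum_{\Lambda \ni v} \|\tilde{h}_\Lambda\| \le \tfrac{1}{2}\,\|H\|_{\textnormal{loc}} \qquad \forall\, v \in [n].
\end{equation}
Next, I would relate $\|h_\Lambda\|$ to $\|\tilde{h}_\Lambda\|$ via the triangle inequality applied to the defining relation $h_\Lambda = \tilde{h}_\Lambda - \langle \tilde{h}_\Lambda\rangle I$, giving
\begin{equation}
\|h_\Lambda\| \le \|\tilde{h}_\Lambda\| + |\langle \tilde{h}_\Lambda\rangle|\,.
\end{equation}

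The one nontrivial point is bounding $|\langle \tilde{h}_\Lambda\rangle|$. Since the expectation is taken with respect to a quantum state (a positive, trace-one operator), one has $|\langle X\rangle| \le \|X\|$ for any self-adjoint $X$: this is immediate from $-\|X\|\,I \le X \le \|X\|\,I$ followed by taking the expectation. Applying this with $X = \tilde{h}_\Lambda$ gives $|\langle \tilde{h}_\Lambda\rangle| \le \|\tilde{h}_\Lambda\|$, and hence $\|h_\Lambda\| \le 2\,\|\tilde{h}_\Lambda\|$.

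Putting everything together, for any $v \in [n]$,
\begin{equation}
\sum_{\Lambda \ni v} \|h_\Lambda\| \;\le\; 2 \sum_{\Lambda \ni v} \|\tilde{h}_\Lambda\| \;\le\; \|H\|_{\textnormal{loc}}\,,
\end{equation}
which is the claim. I do not expect any real obstacle here: the argument is purely a matter of tracking constants, and the main conceptual takeaway is that the factor of $2$ in \eqref{eq:loc} was introduced precisely so that the optimal decomposition of $H$ also gives a clean bound on the \emph{centered} local terms $h_\Lambda$, which are the quantities actually used throughout the concentration and transportation-cost arguments.
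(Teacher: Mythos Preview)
Your proof is correct and follows essentially the same route as the paper's: both use the triangle inequality $\|h_\Lambda\| \le \|\tilde h_\Lambda\| + |\langle \tilde h_\Lambda\rangle|$, then bound $|\langle \tilde h_\Lambda\rangle| \le \|\tilde h_\Lambda\|$, and finally invoke the definition of the local norm to absorb the resulting factor of $2$. There is nothing to add.
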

\begin{proof}
    By definition of $h_\Lambda$, we have $\|h_\Lambda\|\leq \|\tilde h_\Lambda\|+\left|\left\langle \tilde h_\Lambda\right\rangle\right|$ for every $\emptyset\not =\Lambda\subseteq[n]$ then 
\begin{equation}\label{eq:ineqloc}
\sum_{\Lambda \ni v} \|h_\Lambda\|\leq \sum_{\Lambda \ni v} \|\tilde h_\Lambda\|+\sum_{\Lambda \ni v} \left|\left\langle\tilde h_\Lambda\right\rangle\right|\leq 2\sum_{\Lambda \ni v} \|\tilde h_\Lambda\|\leq \|H\|_{\textnormal loc}\qquad \forall v\in[n],    
\end{equation}
where we have respectively used: the triangle inequality, the operator norm inequality $\|\tilde h_\Lambda\|\geq \left\langle \tilde h_\Lambda\right\rangle$, and $\|H\|_{\textnormal{loc}}=2\max_{v\in[n]} \sum_{\Lambda\ni v}\|\tilde h_\Lambda\|$.
\end{proof}

\end{appendices}

%\printbibliography %Prints bibliography

\bibliography{lattices}

\end{document}